\newcommand{\prob}[1][]{
\ifthenelse{\isempty{#1}}%
      {\ensuremath{P}}%
    {\ensuremath{P\left\(#1\right\)}}%
}
\newcommand{\vect}[1]{\boldsymbol{\mathrm{#1}}}
\newcommand{\mat}[1]{\boldsymbol{\mathrm{#1}}}
\newcommand{\tp}[1]{#1^{\mathrm{T}}}
\newcommand{\hr}[1]{#1^{\mathrm{H}}}
\newcommand{\inv}[1]{\left(#1\right)^{-1}}
\newcommand{\tr}[1]{\mathrm{tr}\left(#1\right)}
\newcommand{\diag}[1]{\mathrm{diag}\left(#1\right)}
\newcommand*{\inC}[1]{\in\mathbb{C}^{#1}}
\newcommand{\expt}[1]{\mathbb{E}\left\{#1\right\}}
\newcommand{\cn}[2]{\ensuremath{\mathcal{C}\mathcal{N}\left(#1,#2\right)}}
\newcommand{\cw}[3]{\ensuremath{\mathcal{C}\mathcal{W}\left(#1,#2,#3\right)}}
\newcommand{\icw}[3]{\ensuremath{{\mathcal{C}\mathcal{W}}^{-1}\left(#1,#2,#3\right)}}
\DeclareMathOperator*{\minimize}{minimize}
\DeclareMathOperator*{\argmax}{arg\,max}
\pgfplotsset{compat=1.18}
\newacronym{3gpp}{3GPP}{3rd generation partnership project}
\newacronym{4g}{4G}{fourth-generation}
\newacronym{5g}{5G}{fifth-generation}
\newacronym{6g}{6G}{sixth-generation}
\newacronym{abp}{ABP}{authentication by personalisation}
\newacronym{aclr}{ACLR}{adjacent channel leakage ratio}
\newacronym{adc}{ADC}{analog-to-digital converter}
\newacronym{adr}{ADR}{adaptive data rate}
\newacronym{ag}{AG}{array gain}
\newacronym{ai}{AI}{artificial intelligence}
\newacronym{amam}{AM/AM}{amplitude modulation to amplitude modulation}
\newacronym{amp}{AMP}{approximate message passing}
\newacronym{ampm}{AM/PM}{amplitude modulation to phase modulation}
\newacronym{aoa}{AOA}{angle-of-arrival}
\newacronym{aod}{AOD}{angle-of-departure}
\newacronym{ap}{AP}{access point}
\newacronym{ar}{AR}{augmented reality}
\newacronym{arp}{ARP}{Antenna Reference Point}
\newacronym{asic}{ASIC}{application specific integrated circuit}
\newacronym{auv}{AUV}{autonomous underwater vehicle}
\newacronym{awgn}{AWGN}{additive white Gaussian noise}
\newacronym{bldc}{BLDC}{brushless DC}
\newacronym{bom}{BOM}{bill of materials}
\newacronym{bs}{BS}{base station}
\newacronym{bw}{BW}{bandwidth}
\newacronym{cad}{CAD}{channel activity detection}
\newacronym{cbm}{CBM}{condition based maintenance}
\newacronym{cc}{CC}{constant current}
\newacronym{ccdf}{CCDF}{complementary cumulative distribution function}
\newacronym{ccnn}{CCNN}{circular convolutional neural network}
\newacronym{cdf}{CDF}{cumulative distribution function}
\newacronym{cdrx}{CDRX}{connected mode DRX}
\newacronym{ce}{CE}{coverage enhancement}
\newacronym{ced}{CED}{cumulative energy density}
\newacronym{cf}{CF}{cell-free}
\newacronym{cfo}{CFO}{carrier frequency offset}
\newacronym{cots}{COTS}{commercial off-the-shelf}
\newacronym{cp}{CP}{cyclic prefix}
\newacronym{cpt}{CPT}{capacitive power transfer}
\newacronym{cpu}{CPU}{central-processing unit}
\newacronym{cqi}{CQI}{channel quality indicator}
\newacronym{cr}{CR}{coding rate}
\newacronym{crlb}{CRLB}{Cram\'er-Rao lower bound}
\newacronym{crs}{CRS}{cell reference signal}
\newacronym{cs}{CS}{compressed sensing}
\newacronym{csi}{CSI}{channel state information}
\newacronym{csp}{CSP}{contact service point}
\newacronym{css}{CSS}{chirp spread spectrum}
\newacronym{cv}{CV}{constant voltage}
\newacronym{dac}{DAC}{digital-to-analog converter}
\newacronym{daq}{DAQ}{data acquisition system}
\newacronym{dc}{DC}{direct current}
\newacronym{dcc}{DCC}{dynamic cooperation clustering}
\newacronym{de}{DE}{drain efficiency}
\newacronym{dl}{DL}{downlink}
\newacronym{dlc}{DLC}{Distributed Laser Charging}
\newacronym{dmimo}{D-MIMO}{distributed MIMO}
\newacronym{dpd}{DPD}{digital pre-distortion}
\newacronym{drx}{DRX}{Discontinuous Reception Mode}
\newacronym{e2e}{E2E}{end-to-end}
\newacronym{easa}{EASA}{European Union Aviation Safety Agency}
\newacronym{ecdf}{eCDF}{empirical cumulative distribution function}
\newacronym{ecsp}{ECSP}{edge computing service point}
\newacronym{edlc}{EDLC}{electrostatic double-layer capacitors}
\newacronym{edrx}{eDRX}{Extended Discontinuous Reception Mode}
\newacronym{ee}{EE}{energy efficiency}
\newacronym{egprs}{EGPRS}{Enhanced Data Rates for GSM Evolution}
\newacronym{eirp}{EIRP}{effective isotropic radiated power}
\newacronym{em}{EM}{electromagnetic}
\newacronym{embb}{eMBB}{enhanced Mobile Broadband}
\newacronym{en}{EN}{energy neutral}
\newacronym{eol}{EoL}{end of life}
\newacronym{epu}{EPU}{edge processing unit}
\newacronym{erp}{ERP}{effective radiated power}
\newacronym[plural=ESCs,firstplural=electronic speed controllers (ESCs)]{esc}{ESC}{electronic speed control}
\newacronym{etsi}{ETSI}{European Telecommunications Standards Institute}
\newacronym{evm}{EVM}{Error Vector Magnitude}
\newacronym{fa}{FA}{federation anchor}
\newacronym{fft}{FFT}{fast Fourier transform}
\newacronym{fh}{FH}{fronthaul}
\newacronym{fim}{FIM}{Fisher information matrix}
\newacronym{fom}{FoM}{figure of merit}
\newacronym{fpga}{FPGA}{field-programmable gate array}
\newacronym{fdma}{FDMA}{frequency-division multiple access}
\newacronym{gb}{GB}{grant-based}
\newacronym{gf}{GF}{grant-free}
\newacronym{gnb}{gNB}{Next Generation Node B}
\newacronym{gnn}{GNN}{graph neural network}
\newacronym{gnss}{GNSS}{global navigation satellite system}
\newacronym{gpclk}{GPCLK}{general purpose clock}
\newacronym{gprs}{GPRS}{General Packet Radio Services}
\newacronym{gps}{GPS}{Global Positioning System}
\newacronym{gpu}{GPU}{graphical processing unit}
\newacronym{gsm}{GSM}{Global System for Mobile Communications}
\newacronym{gwp}{GWP}{Global Warming Potential}
\newacronym{harq}{HARQ}{hybrid automatic repeat request}
\newacronym{hat}{HAT}{hardware attached on top}
\newacronym{hcs}{HCS}{human-centric services}
\newacronym{hpbm}{HPBM}{half power beam width}
\newacronym{HyMPRo}{HyMPRo}{Hybrid Multi-Path Routing algorithm}
\newacronym{ib}{IB}{in-band}
\newacronym{ibo}{IBO}{input back-off}
\newacronym{ic}{IC}{integrated circuit}
\newacronym{ict}{ICT}{information and communications technology}
\newacronym{id}{ID}{information decoding}
\newacronym{if}{IF}{intermediate frequency}
\newacronym{iid}{i.i.d.}{independently and identically distributed}
\newacronym{im}{IM}{intermodulation}
\newacronym{imu}{IMU}{inertial measurement unit}
\newacronym{iot}{IoT}{Internet of things}
\newacronym{ipt}{IPT}{inductive power transfer}
\newacronym{ipy}{IPY}{Interventions per Year}
\newacronym{ism}{ISM}{industrial, scientific and medical}
\newacronym{isp}{ISP}{internet service provider}
\newacronym{kkt}{KKT}{Karush--Kuhn--Tucker}
\newacronym{kpi}{KPI}{key performance indicator}
\newacronym{kvi}{KVI}{key value indicator}
\newacronym{lca}{LCA}{life cycle assessment}
\newacronym{lco}{LCO}{lithium cobalt oxide}
\newacronym{ldo}{LDO}{Low-dropout voltage regulator}
\newacronym{ldpc}{LDPC}{low-density parity-check}
\newacronym{lfp}{LFP}{lithium iron phosphate}
\newacronym{lic}{LIC}{lithium-ion capacitor}
\newacronym{liion}{Li-ion}{lithium-ion}
\newacronym{lipo}{LiPo}{lithium polymer}
\newacronym{lis}{LIS}{large intelligent surface}
\newacronym{llh}{LLH}{log-likelihood}
\newacronym{lmmse}{LMMSE}{least minimum mean square error}
\newacronym{lmo}{LMO}{lithium ion manganese oxide}
\newacronym{lora}{LoRa}{long range}
\newacronym{lorawan}{LoRaWAN}{long-range wide-area network}
\newacronym{los}{LoS}{line-of-sight}
\newacronym{lp}{LP}{linear programming}
\newacronym{lpt}{LPT}{laser power transfer}
\newacronym{lpwa}{LPWA}{Low Power Wide Area}
\newacronym{lpwan}{LPWAN}{low-power wide-area network}
\newacronym{lqi}{LQI}{link quality indicator}
\newacronym{lrelu}{LReLU}{leaky rectified linear unit}
\newacronym{lrt}{LRT}{likelihood-ratio test}
\newacronym{ls}{LS}{least squares}
\newacronym{lsfc}{LSFC}{large-scale fading component}
\newacronym{ltc}{LTC}{lithium thionyl chloride}
\newacronym{lte}{LTE}{long term evolution}
\newacronym{lto}{LTO}{lithium titanate}
\newacronym{m2m}{M2M}{machine to machine}
\newacronym{mac}{MAC}{Medium Access Control}
\newacronym{mcl}{MCL}{Maximum Coupling Loss}
\newacronym{mcs}{MCS}{modulation and coding scheme}
\newacronym{mcu}{MCU}{Microcontroller Unit}
\newacronym{mec}{MEC}{multi-access edge computing}
\newacronym{mems}{MEMS}{micro-electromechanical systems}
\newacronym{mf}{MF}{matched filter}
\newacronym{mimo}{MIMO}{multiple-input multiple-output}
\newacronym{miso}{MISO}{multiple-input single-output}
\newacronym{ml}{ML}{machine learning}
\newacronym{mlp}{MLP}{multilayer perceptron}
\newacronym{mmimo}{mMIMO}{massive MIMO}
\newacronym{mmse}{MMSE}{minimum mean square error}
\newacronym{mmtc}{mMTC}{massive machine-typed communication}
\newacronym{mmwave}{mmWave}{millimeter wave}
\newacronym{mpc}{MPC}{multipath component}
\newacronym{mppt}{MPPT}{maximum power point tracking}
\newacronym{mr}{MR}{maximum ratio}
\newacronym{mrc}{MRC}{maximum ratio combining}
\newacronym{mrc_em}{MRC}{maximum ratio combining}
\newacronym{mrt}{MRT}{maximum ratio transmission}
\newacronym{mse}{MSE}{mean square error}
\newacronym{mtc}{MTC}{Machine-Type Communication}
\newacronym{multi-rat}{Multi-RAT}{multiple radio access technology}
\newacronym{nb}{NB}{narrowband}
\newacronym{nbiot}{NB-IoT}{narrowband IoT}
\newacronym{nca}{NCA}{nickel cobalt aluminum}
\newacronym{nicd}{NiCd}{nikkel cadmium}
\newacronym{nimh}{NiMH}{nikkel metal hydride}
\newacronym{nlos}{NLoS}{non-line-of-sight}
\newacronym{nmc}{NMC}{nickel manganese cobalt}
\newacronym{nn}{NN}{neural network}
\newacronym{nnls}{NNLS}{non-negative least squares}
\newacronym{noma}{NOMA}{non-orthogonal multiple access}
\newacronym{nr}{NR}{New Radio}
\newacronym{ntp}{NTP}{network time protocol}
\newacronym{oai}{OAI}{OpenAirInterface} 
\newacronym{ofdm}{OFDM}{orthogonal frequency-division multiplexing}
\newacronym{ofdmim}{OFDM-IM}{OFDM with index modulation}
\newacronym{oob}{OOB}{out-of-band}
\newacronym{os}{OS}{operating system}
\newacronym{ota}{OTA}{over-the-air}
\newacronym{otaa}{OTAA}{over-the-air authentication}
\newacronym{p2p}{P2P}{point-to-point}
\newacronym{pa}{PA}{power amplifier}
\newacronym{pae}{PAE}{power-added efficiency}
\newacronym{papr}{PAPR}{peak-to-average power ratio}
\newacronym{pc}{PC}{pilot count}
\newacronym{pcb}{PCB}{printed circuit board}
\newacronym{pcg}{PCG}{power consumption gain}
\newacronym{pd}{PD}{powered device}
\newacronym{pdcch}{PDCCH}{physical downlink control channel}
\newacronym{pdp}{PDP}{power delay profile}
\newacronym{pdsch}{PDSCH}{physical downlink shared channel}
\newacronym{per}{PER}{packet error rate}
\newacronym{pg}{PG}{path gain}
\newacronym{pgd}{PGD}{proximal gradient descent}
\newacronym{pl}{PL}{path loss}
\newacronym{plr}{PLR}{packet loss ratio}
\newacronym{pll}{PLL}{phase-locked loop}
\newacronym{poe}{PoE}{power over Ethernet}
\newacronym{pps}{PPS}{pulse per second}
\newacronym{prbs}{PRBs}{Physical Resource Blocks}
\newacronym{psd}{PSD}{power spectral density}
\newacronym{pse}{PSE}{power sourcing equipment}
\newacronym{psm}{PSM}{power saving mode}
\newacronym{pss}{PSS}{primary synchronisation signal}
\newacronym{ptp}{PTP}{Precision Time Protocol}
\newacronym{pdr}{PDR}{packet delivery ratio}
\newacronym{ptw}{PTW}{paging time window}
\newacronym{qam}{QAM}{quadrature amplitude modulation}
\newacronym{qos}{QoS}{quality-of-service}
\newacronym{quadriga}{QuaDRiGa}{QUAsi Deterministic RadIo channel GenerAtor}
\newacronym{ra}{RA}{Random Access}
\newacronym{ran}{RAN}{radio access network}
\newacronym{rar}{RAR}{Random Access Response}
\newacronym{rat}{RAT}{radio access technology}
\newacronym{rbs}{RBS}{radio base station}
\newacronym{re}{RE}{radio element}
\newacronym[]{relu}{ReLU}{rectified linear unit}
\newacronym{rf}{RF}{radio frequency}
\newacronym{rfeh}{RFEH}{radio frequency energy harvesting}
\newacronym{rfic}{RFIC}{radio-frequency integrated circuit}
\newacronym{rfid}{RFID}{radio frequency identification}
\newacronym{rfpt}{RFPT}{radio frequency power transfer}
\newacronym{ris}{RIS}{reflective intelligent surface}
\newacronym{rms}{RMS}{root-mean-square}
\newacronym{rmse}{RMSE}{root-mean-square error}
\newacronym{ros}{ROS}{robot operating system}
\newacronym{rpi}{RPi}{Raspberry Pi}
\newacronym{rrc}{RRC}{Radio Resource Connection}
\newacronym{rreq}{RREQ}{route request packet}
\newacronym{rsrp}{RSRP}{Reference Signals Received Power}
\newacronym{rss}{RSS}{received signal strength}
\newacronym{rssi}{RSSI}{received signal strength indicator}
\newacronym{rtc}{RTC}{real time clock}
\newacronym{rtk}{RTK}{real time kinematics}
\newacronym{rts}{RTS}{ray tracing simulator}
\newacronym{rw}{RW}{RadioWeaves}
\newacronym{rx}{RX}{receiver}
\newacronym{rzf}{RZF}{regularized zero forcing}
\newacronym{sa}{SA}{synchronization anchor}
\newacronym{scfdma}{SCFDMA}{single-carrier frequency division multiple access}
\newacronym{sdg}{SDG}{Sustainable Development Goal}
\newacronym{sdn}{SDN}{software-defined network}
\newacronym{sdr}{SDR}{signal-to-distortion ratio}
\newacronym{sf}{SF}{spreading factor}
\newacronym{sfn}{SFN}{single frequency network}
\newacronym{sinr}{SINR}{signal-to-interference-plus-noise ratio}
\newacronym{siso}{SISO}{single-input single-output}
\newacronym{slam}{SLAM}{simultaneous localization and mapping}
\newacronym{slc}{SLC}{spatial leakage suppression}
\newacronym{smc}{SMC}{specular multipath component}
\newacronym{smps}{SMPS}{switched mode power supply}
\newacronym{sndr}{SNDR}{signal-to-noise-and-distortion ratio}
\newacronym{snidr}{SNIDR}{signal-to-noise-and-interference-and-distortion ratio}
\newacronym{snr}{SNR}{signal-to-noise ratio}
\newacronym{soc}{SoC}{state of charge}
\newacronym{ssb}{SSB}{synchronisation signal block}
\newacronym{ssd}{SSD}{solid state drive}
\newacronym{steam}{STEAM}{science, technology, engineering, the arts, and mathematics}
\newacronym{svd}{SVD}{singular value decomposition}
\newacronym{tau}{TAU}{tracking area update}
\newacronym{tcer}{TCER}{transported to consumed energy ratio}
\newacronym{tdd}{TDD}{time division duplexing}
\newacronym{tdoa}{TDOA}{time-difference-of-arrival}
\newacronym{toa}{TOA}{time-of-arrival}
\newacronym{tof}{ToF}{time-of-flight}
\newacronym{trp}{TRP}{Transmission Reception Point}
\newacronym{tsn}{TSN}{time-sensitive networking}
\newacronym{ttm}{TTM}{time to market}
\newacronym{ttn}{TTN}{The Things Network}
\newacronym{tx}{TX}{transmitter}
\newacronym{tsch}{TSCH}{time slotted channel hopping}
\newacronym{tdma}{TDMA}{time division multiple access}
\newacronym{uav}{UAV}{unmanned aerial vehicle}
\newacronym{udp}{UDP}{User Datagram Protocol}
\newacronym{ue}{UE}{user equipment}
\newacronym{ugv}{UGV}{unmanned ground vehicle}
\newacronym{uhd}{UHD}{USRP hardware driver}
\newacronym{uhf}{UHF}{ultra-high frequency}
\newacronym{ul}{UL}{uplink}
\newacronym{ula}{ULA}{uniform linear array}
\newacronym{upa}{UPA}{uniform planar array}
\newacronym{ura}{URA}{uniform rectangular array}
\newacronym{urllc}{URLLC}{ultra-reliable low-latency communications}
\newacronym{usrp}{USRP}{universal software radio peripheral}
\newacronym{uv}{UV}{unmanned vehicle}
\newacronym{uwb}{UWB}{ultrawideband}
\newacronym{vep}{VEP}{virtual edge platform}
\newacronym{vlc}{VLC}{visible light communication}
\newacronym{vlp}{VLP}{visible light positioning}
\newacronym{wb}{WB}{wideband}
\newacronym{wpt}{WPT}{wireless-power transfer}
\newacronym{wr}{WR}{white-rabbit}
\newacronym{wrsn}{WRSN}{wireless rechargeable sensor network}
\newacronym{wsn}{WSN}{wireless sensor network}
\newacronym{z3ro}{Z3RO}{zero third-order distortion}
\newacronym{zf}{ZF}{zero-forcing}
\newacronym{zmcscg}{ZMCSCG}{zero mean circularly symmetric complex Gaussian}
\newtheoremstyle{mystyle}
  {}
  {}
  {\itshape}
  {}
  {\bfseries}
  {.}
  { }
  {}
\theoremstyle{mystyle}
\newtheorem{theorem}{Theorem}
\newtheorem{lemma}{Lemma}
\newtheorem{corollary}{Corollary}
\begin{document}

\title{Energy-Saving Precoder Design for\\ Narrowband and Wideband Massive MIMO}

\author{Emanuele Peschiera,
        Fran\c{c}ois Rottenberg,~\IEEEmembership{Member,~IEEE}
\thanks{The authors are with ESAT-DRAMCO, Campus Ghent, KU Leuven, 9000 Ghent, Belgium (email: \href{mailto:emanuele.peschiera@kuleuven.be}{emanuele.peschiera@kuleuven.be}).}}

\date{}



\maketitle

\begin{abstract}
In this work, we study massive \gls{mimo} precoders optimizing power consumption while achieving the users' rate requirements. We first characterize analytically the solutions for narrowband and wideband systems minimizing the \glspl{pa} consumption in low system load, where the per-antenna power constraints are not binding. After, we focus on the asymptotic wideband regime. The power consumed by the whole \gls{bs} and the high-load scenario are then also investigated. We obtain simple solutions, and the optimal strategy in the asymptotic case reduces to finding the optimal number of active antennas, relying on known precoders among the active antennas. Numerical results show that large savings in power consumption are achievable in the narrowband system by employing antenna selection, while all antennas need to be activated in the wideband system when considering only the \glspl{pa} consumption, and this implies lower savings. When considering the overall BS power consumption and a large number of subcarriers, we show that significant savings are achievable in the low-load regime by using a subset of the \gls{bs} antennas. While optimization based on transmit power pushes to activate all antennas, optimization based on consumed power activates a number of antennas proportional to the load.
\end{abstract}
\glsresetall

\begin{IEEEkeywords}
Massive MIMO, Power amplifiers, Precoding, Power consumption.
\end{IEEEkeywords}

\section{Introduction}
\label{sec:intro}

\subsection{Problem Statement}

\IEEEPARstart{T}{he} reduction of electrical energy consumption and carbon footprint is among the priorities of our society~\cite{EGD}. The \gls{ict} industry targets to reduce its greenhouse gas emissions by 45\% by 2030~\cite{ITU}. Moreover, decreasing the energy consumption of radio access networks lowers the operational expenses of operators, which have the largest impact on their bills~\cite{Andersson22}. Energy efficiency of mobile communications has been the object of research for more than a decade~\cite{Han11,Buzzi16,LPerez22}, but there is still room for improvement. Measurements~\cite{Huawei,Golard22} show that the power consumption of current wireless networks does not depend much on data traffic and barely decreases even when the traffic is close to zero. This creates opportunities for energy savings: improving the load dependence of the consumption by, e.g., implementing adaptive shutdown of the network components~\cite{Andersson22,Huawei,3gpp}. Recent studies highlight how energy-saving gains up to 48.2\% are achievable by only adapting the spatial transmission resources to the traffic~\cite{3gpp}.

The \gls{pa} is among the main contributors to the energy consumption of a \gls{bs}~\cite{Auer11}, even though the share of the baseband processing is increasing in \gls{5g} systems~\cite{ge17}. In \gls{ofdm}, the current broadband modulation format used in cellular systems, the \gls{pa} operates in linear regime to avoid signal distortion and out-of-band emissions. However, the \gls{pa} efficiency is maximal when operating in the saturation region, leading to a trade-off between system capacity and power consumption~\cite{Muneer20}. Despite this, the conventional massive \gls{mimo} precoders are typically found by optimizing a certain figure of merit at the receivers, i.e., rate, \gls{sinr}, or \gls{mse}, under a transmit power constraint~\cite{Bjornson14}. Indeed, the maximal transmit power is commonly fixed by regulations. This would be optimal --~in terms of \glspl{pa} consumption~-- if the \gls{pa} efficiency was fixed, but it actually depends on the output power level~\cite{Grebennikov05}, making power consumption and transmit power not linearly proportional. Moreover, the optimization of the energy performance of existing precoders is traditionally based on their \gls{ee}, i.e., the number of transferred bits per unit of energy~\cite{Chen11}. However, \gls{ee} maximization does not guarantee a given \gls{qos} and can lead to undesirable situations, e.g., low power consumption but small rates, or large rates but equally high power consumption. An alternative approach is to fix the \gls{qos} requirements and minimize the power consumption. In this way, one can optimize the transmission resources for a specific traffic level, and avoid allocating too many resources when not necessary~\cite{3gpp}. The above makes the trade-off between energy consumption and performance requirements transparent~\cite{Andersson22}.

\subsection{State-of-the-Art}

Considering a \gls{bs} equipped with $M$ antennas, $p_m$ is defined as the power at the output of antenna $m$. The total transmit power is then given by $p_{\mathrm{tx}} = \sum_{m=0}^{M-1}p_m$. A common assumption is that the consumed power by the \gls{pa} at antenna $m$ can be modeled as $\frac{p_m}{\eta}$, where $\eta$ is the fixed \gls{pa} efficiency. In this way, the total power consumed by the \glspl{pa} is linearly related to the total transmit power
\begin{equation}
\label{eq:p_cons_1}
    p_{\mathrm{PAs}}^\mathrm{(ideal)} = \sum_{m=0}^{M-1}\frac{p_m}{\eta} = \frac{p_{\mathrm{tx}}}{\eta}.
\end{equation}
However, (\ref{eq:p_cons_1}) represents an ideal model. A more realistic expression of the efficiency of the PA $m$ is $\eta_m = \eta_{\mathrm{sat}}\left(\frac{p_m}{p_{\mathrm{sat}}}\right)^{1/2}$, where $p_{\mathrm{sat}}$ is the \gls{pa} saturation power and $\eta_{\mathrm{sat}}$ is the maximal \gls{pa} efficiency, achieved when $p_m = p_\mathrm{sat}$~\cite{Grebennikov05}. This expression is accurate for class B amplifiers~\cite{He11}. Considering also an industrial \gls{pa} for \gls{5g} \gls{mmimo}~\cite{Qorvo}, the square-root behavior proves to be suited to model the efficiency when far from saturation (e.g., considering a $10$ dB back-off). A back-off of $10$ dB is typically used in \gls{ofdm} systems to operate in the linear regime~\cite{Auer11}, given the high \gls{papr}. Under this model, the total power consumed by the \glspl{pa} becomes
\begin{equation}
\label{eq:p_cons_2}
    p_{\mathrm{PAs}} = \sum_{m=0}^{M-1}\frac{p_m}{\eta_m} = \frac{p_{\mathrm{sat}}^{1/2}}{\eta_{\mathrm{sat}}}\sum_{m=0}^{M-1}p_m^{1/2}.
\end{equation}
Despite the extensive research in \gls{mmimo} signal processing, few works have considered (\ref{eq:p_cons_2}) in the precoders design, especially for a multicarrier system. In~\cite{Persson13,Persson14}, the authors analyzed the cases of \gls{miso} and \gls{mimo} with orthogonal channels. In the \gls{miso} case, the problem is formulated as minimization of (\ref{eq:p_cons_2}) subject to a rate constraint and per-antenna power constraints. The solution corresponds to antenna selection, saturating the antennas with the largest channel gains and not activating the remaining antennas. The application to point-to-point \gls{mimo} is studied in~\cite{Cheng19}, but the multi-user \gls{mimo} system is not investigated in detail. The work in~\cite{Cheng15} assumes the general form of the precoder (\gls{mrt} and \gls{zf} schemes) and finds a posteriori the system parameters (number of \gls{bs} antennas, number of users, etc.) that minimize (\ref{eq:p_cons_2}). They optimize the number of utilized \gls{bs} antennas in a low-traffic regime, as in the high-traffic situation their choice is to always keep all the \gls{bs} antennas turned on to guarantee the coverage in the cell. The solution (when considering only the power consumed by the \glspl{pa}) is to activate all the \gls{bs} antennas, in contrast to~\cite{Persson13, Persson14}. The reason is that~\cite{Persson13, Persson14} consider the minimization of (\ref{eq:p_cons_2}) from the precoding problem definition, not assuming any particular precoding scheme. When other \gls{bs} components are included in the power consumption model, the optimal number of \gls{bs} antennas to be utilized decreases. 

Similarly to~\cite{Cheng15}, several studies have investigated the whole \gls{bs} power consumption for energy-efficient \gls{mmimo} communications. A power consumption model for a \gls{mmimo} \gls{bs} is developed in~\cite{massivemimobook}, where the total consumed power is the sum of the contributions from the \glspl{pa} and from the \gls{bs} circuits, plus a static consumption term. The circuit power consumption considers the radio frequency transceivers, the baseband processing, the backhauling, the cooling, and the power supply, and typically scales linearly with $M$~\cite{massivemimobook}. In~\cite{Senel19}, the authors jointly minimize the total transmit power and the circuit power consumption for \gls{mrt} and \gls{zf} precoders. Differently from~\cite{Cheng15}, they consider different traffic situations and they show that the number of activated antennas increases as a function of the traffic.

Remarkably, the multicarrier regime has not been investigated in enough detail in the literature. When considering a fixed \gls{pa} efficiency, the definition (\ref{eq:p_cons_1}) allows one to decouple and optimally solve the precoder design problem per-subcarrier. When optimizing the expression (\ref{eq:p_cons_2}), instead, the problem is coupled between subcarriers as the PA efficiency depends on the total power at its input. The work~\cite{Cheng15} considers (\ref{eq:p_cons_2}) in a \gls{mmimo} system assuming a uniform power among the antennas, given that each antenna serves a large number of subcarriers. Nonetheless, the form of the precoder is fixed a priori and this constitutes a limit in the analysis. The authors in~\cite{Cheng19} consider the case of point-to-point \gls{mimo} and optimize the sum-capacity subject to consumed power constraints and per-antenna power constraints. The expression of the consumed power takes into account the multicarrier nature of the system. Their conclusion is in line with ours (i.e., all the \gls{bs} antennas are equally good, then random antenna selection performs well), but the analysis of the \gls{mmimo} scenario and further elaboration on how to optimize the parameters of the system are missing.

Available works have considered the maximization of the \gls{ee} in \gls{mimo} precoding~\cite{Xu13,Bjornson15}, in some cases also considering a non-fixed \gls{pa} efficiency~\cite{Dong17,Hossain18}. As previously mentioned, our approach does not focus on the \gls{ee}, but fixes the rate requirements and derives the precoder that minimizes the power consumption. Indeed, conventional precoders can be obtained as solutions to optimization problems that minimize the transmit power subject to \gls{qos} constraints~\cite{Bjornson14}. Therefore, one can optimize the transmission resources for an instance of \gls{qos} requirements, and repeat this process once the requirements change.

\subsection{Contributions}
In this paper, we propose \gls{mmimo} precoders that minimize the consumed power under a zero inter-user interference constraint, starting from a general multi-user multicarrier system. The study is an extension of our previous work~\cite{Peschiera22}, where we only considered the \glspl{pa} consumption and the single-carrier scenario (referred to as narrowband to underline that we consider the channel to be frequency flat). Our main contributions are:
\begin{itemize}
\item We derive closed-form solutions to the problem of minimizing the \glspl{pa} consumption in wideband systems (Section~\ref{sec:MC_system}), where the powers at the \gls{bs} antennas are computed via iterative fixed point algorithm. We focus on the low-load scenario, in which the antenna output powers are lower than the maximal \gls{pa} power and the per-antenna power constraints are not binding.
\item We characterize analytically the solutions to the same problem in narrowband systems and low load (Section~\ref{sec:SC_system}), and we show that the single-user solution does not require an iterative method to be retrieved. 
\item We analyze the asymptotic behavior of wideband systems (Section~\ref{sec:asymptotic_MC}), solving the problem of minimizing the \gls{bs} consumption subject to per-antenna power constraints. We derive the optimal number of active antennas, which depends on the traffic load.
\end{itemize}
In Section~\ref{sec:sim_results}, the performance of the proposed solutions are evaluated numerically, and their complexity is discussed. Section~\ref{sec:conclusion} concludes the paper.

\subsection*{Notations}
Vectors and matrices are denoted by bold lowercase $\vect{a}$ and uppercase letters $\mat{A}$, respectively. The superscripts $(\cdot)^*$, $\tp{(\cdot)}$ and $\hr{(\cdot)}$ indicate complex conjugate, transpose, and conjugate transpose operations. The symbols $\tr{\cdot}$ and $\expt{\cdot}$ indicate the trace and the expectation operators. The notation $\diag{\vect{a}}$ refers to a diagonal matrix whose $k$-th diagonal entry is equal to the $k$-th element of $\vect{a}$. A diagonal matrix associated with the vector $\vect{a}$ is indicated by $\mat{D}_{\vect{a}}$. The identity matrix of size $K$ is $\mat{I}_K$. The $(i,j)$-th element of $\mat{A}$ is indicated by $\left[\mat{A}\right]_{i,j}$. The notation $\mathcal{CN}(\mu,\sigma^2)$ stands for a complex normal distribution with mean $\mu$ and variance $\sigma^2$. A complex Wishart distribution with mean $\mat{M}$ and $n$ degrees of freedom is indicated by $\mathcal{CW}(\mat{M},n,p)$, while a complex inverse Wishart distribution with mean $\mat{M}\inC{p\times p}$ and $n$ degrees of freedom is denoted by $\mathcal{CW}^{-1}(\mat{M},n,p)$. The symbol $\delta_{i,j}$ is the Kronecker delta function. If $f\colon \mathcal{X}\mapsto\mathbb{R}$ and $x\in\mathcal{X}$, $\lfloor x \rceil$ selects, among the two closest integers to $x$, the one associated to the minimum value of $f$. The matrix $\mat{A}^{1/2}$ is the only Hermitian positive semidefinite matrix satisfying $\mat{A} = \mat{A}^{1/2}\mat{A}^{1/2}$, $\mat{A}^{n}$ is the multiplication of $n$ $\mat{A}$ matrices and $\mat{A}^{-n}$ is the multiplication of $n$ $\mat{A}^{-1}$ matrices, for $n\in\mathbb{N}^+$. If $z=x+\jmath y$, with $x, y \in \mathbb{R}$, the Wirtinger derivative is defined as $\frac{\partial}{\partial z^*} = \frac{1}{2}\left(\frac{\partial}{\partial x}+\jmath\frac{\partial}{\partial y}\right)$. $\mathcal{O}(\cdot)$ stands for the big O notation.

\section{System Model}
\label{sec:sys_model}

\subsection{Transmission Model}

We consider the downlink of a single-cell \gls{mmimo} \gls{ofdm} system with $Q$ subcarriers. The $M$ antennas at the \gls{bs} serve $K$ single-antenna users using space-division multiplexing. The transmitted symbol for the user $k$ at the subcarrier $q$ is denoted by $s_{k,q}$, and the symbols are uncorrelated and of unit variance, i.e., $\expt{s_{k,q}s_{k',q'}^*} = \delta_{k,k'}\delta_{q,q'}$. Indicating with $w_{m,k,q}$ the precoding coefficient for antenna $m$, user $k$ and subcarrier $q$, the precoded signal in the frequency domain at antenna $m$ and subcarrier $q$ is
\begin{equation}
\label{eq:precoded_signal}
    x_{m,q} = \sum_{k=0}^{K-1}w_{m,k,q}s_{k,q}.
\end{equation}
We assume the cyclic prefix to be sufficiently long and the channel to be time-invariant over an \gls{ofdm} symbol period so that it can be considered flat at each subcarrier. Defining $h_{k,m,q}$ as the channel coefficient between user $k$ and antenna $m$ at subcarrier $q$, the signal received by the user $k$ at the subcarrier $q$ is
\begin{equation}
\label{eq:received_signal}
    r_{k,q} = \sum_{m=0}^{M-1}h_{k,m,q}\sum_{k'=0}^{K-1}w_{m,k',q}s_{k',q} + \nu_{k,q}
\end{equation}
where $\nu_{k,q} \sim \cn{0}{\sigma_\nu^2}$ represents the thermal noise, which is assumed \gls{iid} among subcarriers and users.

The above description represents a wideband multi-user scenario. The particular cases of narrowband and single-user systems are obtained for $Q=1$ and $K=1$, respectively. When we vary the number of subcarriers, the channel is assumed to remain flat at the subcarrier level. The underlying assumption is that the subcarrier spacing is fixed (i.e., the system bandwidth will decrease as a function of $Q$). The interest in analyzing a narrowband system is motivated, e.g., by the inclusion of services as \gls{nbiot} in the \gls{5g} standard~\cite{Dahlman20}.

\subsection{Power Consumption Model}

Considering the precoded signal in (\ref{eq:precoded_signal}), the transmit power at antenna $m$ equals
\begin{equation}
\label{eq:power_per-antenna}
    p_m = \sum_{q=0}^{Q-1}\expt{\left|x_{m,q}\right|^2} = \sum_{k=0}^{K-1}\sum_{q=0}^{Q-1}\left|w_{m,k,q}\right|^2.
\end{equation}
Consequently, the total transmit power at the \gls{bs} is given by
\begin{equation}
\label{eq:total_tx_power}
    p_{\mathrm{tx}} = \sum_{m=0}^{M-1}\sum_{q=0}^{Q-1}\expt{\left|x_{m,q}\right|^2} = \sum_{m=0}^{M-1}\sum_{k=0}^{K-1}\sum_{q=0}^{Q-1}\left|w_{m,k,q}\right|^2.
\end{equation}
Moreover, following the model in (\ref{eq:p_cons_2}), the total power consumed by the \glspl{pa} is given by
\begin{equation}
\label{eq:total_consumed_power}
    p_\mathrm{PAs} = \underbrace{\frac{p_{\mathrm{max}}^{1/2}}{\eta_{\mathrm{max}}}}_{\alpha}\sum_{m=0}^{M-1}\left(\sum_{k=0}^{K-1}\sum_{q=0}^{Q-1}\left|w_{m,k,q}\right|^2\right)^{1/2}.
\end{equation}
Here and in the following, we consider $p_\mathrm{max} = \frac{p_\mathrm{sat}}{\mathrm{BO}}$ as the maximal \gls{pa}'s power, where $\mathrm{BO}$ is the back-off (e.g., $\mathrm{BO}=10$), and $\eta_\mathrm{max}$ as the \gls{pa} efficiency when $p_m=p_\mathrm{max}$. This choice is justified by the fact that a \gls{mimo} \gls{ofdm} system must operate in the linear regime.\footnote{Reducing the back-off would allow us to achieve higher \glspl{pa} efficiencies, but would also require to introduce non-linear distortions in the signal model and, e.g., an additional constraint forcing the distortion to zero~\cite{Rottenberg23}, rendering the problems at hand more difficult to be solved.}

Expression (\ref{eq:total_consumed_power}) only considers the \glspl{pa} contribution, while other \gls{bs} components are also consuming. A simple, though appropriate, model of the \gls{bs} power consumption is
\begin{equation}
\label{eq:total_consumed_power_BS}
    p_{\mathrm{BS}} = p_\mathrm{PAs}+p_{\mathrm{fix}}+\mathcal{C}M_\mathrm{a}
\end{equation}
where $p_{\mathrm{fix}}$ is the static power consumption, $\mathcal{C}$ is a positive linear scaling constant, and $M_\mathrm{a}\leq M$ is the number of active antennas (i.e., associated with a non-zero transmit power $p_m$). The term $\mathcal{C}M_\mathrm{a}$ accounts for the non-static contribution of the components other than the \glspl{pa}, such as the transceiver chains and the baseband processing. A \gls{rf} transceiver chain comprises \glspl{dac}, \glspl{adc}, filters, and mixers, and its consumption is either zero (non-active antenna) or equal to a constant value (active antenna). For common linear precoding schemes, the power consumption of the baseband unit is also linear in the number of active antennas~\cite{massivemimobook}. 
It is worth pointing out that hybrid precoding strategies, using less \gls{rf} chains than digital streams, enable the reduction of the \gls{bs} consumption~\cite{Gao16}. However, this paper focuses on fully digital architectures, which are typically used in sub-6 GHz systems.

\subsection{Assumptions}

Some assumptions are made in certain sections of the paper:

    $\mathbf{(As1)}$: low-load scenario, $p_\mathrm{max} \to \infty$.
\\We point out that, by letting $p_\mathrm{max}$ go to infinity, we practically mean that the constraints on the maximal power per antenna are not binding, as it usually happens in a low-load scenario. Indeed, if we define the load at antenna $m$ as the ratio $\frac{p_m}{p_\mathrm{max}}$, in a low-load scenario (i.e., with few users or low path loss) we expect that $p_m \ll p_\mathrm{max} \; \forall m$.
    
    $\mathbf{(As2)}$: asymptotic wideband regime, $Q \to \infty$.
\\Also in this case, we underline that the results obtained by letting $Q$ go to infinity apply also for finite values of $Q$.

    $\mathbf{(As3)}$: uncorrelated Rayleigh fading, $\mat{H}_q = \mat{D}_{\vect{\beta}}^{1/2}\mat{G}_q$,
\\where $\mat{H}_q\inC{K\times M}$ is the channel matrix at subcarrier $q$, $\mat{D}_{\vect{\beta}} = \diag{\beta_0,\dotsc,\beta_{K-1}}$ is the large-scale fading matrix and $\beta_k$ is the large-scale fading coefficient of user $k$. The matrix $\mat{G}_q$ models the small-scale fading and is composed of \gls{iid} elements, where the single entry is $g_{k,m,q} \sim \cn{0}{1}$. The \gls{iid} assumption is made over space and not over frequency, therefore the channels of different subcarriers can be correlated.

\section{Precoders Design for Wideband System}
\label{sec:MC_system}

\subsection{Traditional Transmit Power Solution}

Denoting by $\gamma_k$ the target \gls{sinr} for the user $k$, the precoder minimizing the total transmit power is the solution to
\begin{equation}
\label{eq:ZF_ptx}
\begin{aligned}
    & \minimize_{\{w_{m,k,q}\}}  \quad  	&& 	p_{\mathrm{tx}} = \sum_{m=0}^{M-1}p_m \\
    & \mathrm{subject\ to} \quad		    && 	\mat{H}_q\mat{W}_q = \tilde{\mat{D}}_{\vect{\gamma}}^{1/2}\sigma_\nu \;\; \forall q, \\
    &                                       &&   p_m \leq p_{\mathrm{max}} \;\; \forall m
\end{aligned}
\end{equation}
where $p_m$ is given by (\ref{eq:power_per-antenna}), $\mat{H}_q \in \mathbb{C}^{K \times M}$ and $\mat{W}_q \in \mathbb{C}^{M \times K}$ are the channel and precoding matrices at subcarrier $q$, respectively, $\tilde{\mat{D}}_{\vect{\gamma}} = \diag{\frac{\gamma_0}{Q},\dotsc,\frac{\gamma_{K-1}}{Q}}$ contains the target users' \glspl{sinr} normalized with respect to $Q$,\footnote{This \gls{sinr} normalization has a practical meaning: it prevents that the user data rate goes to infinity as the number of subcarriers grows.} and $\sigma_\nu$ is the noise standard deviation. The \gls{sinr} at each subcarrier for user $k$ is then $\frac{\gamma_k}{Q}$. The first constraint is a \gls{zf} constraint, forcing the inter-user interference to zero. In the majority of the next sections, we rely on $\mathbf{(As1)}$ and we will relax the maximal power constraints.

The problem at hand is convex and differentiable. It can be solved by, e.g., the Lagrange multipliers method. Under $\mathbf{(As1)}$, the solution at the subcarrier $q$ is given by
\begin{equation}
\label{eq:ZF_ptx_sol}
    \mat{W}_q = \hr{\mat{H}_q}\left(\mat{H}_q\hr{\mat{H}_q}\right)^{-1}\tilde{\mat{D}}_{\vect{\gamma}}^{1/2}\sigma_\nu
\end{equation}
which corresponds to the per-subcarrier \gls{zf}~\cite{spencer04}. Note that we consider instantaneous \gls{csi} to be available at the \gls{bs} for the computation of the precoding matrices.

\subsection{PAs Consumed Power Solution}

The precoder minimizing the total power consumed by the \glspl{pa} is instead found by solving the following problem
\begin{equation}
\label{eq:ZF_pcons}
\begin{aligned}
    & \minimize_{\{w_{m,k,q}\}}  \quad  	&& 	p_\mathrm{PAs} = \alpha\sum_{m=0}^{M-1}p_m^{1/2} \\
    & \mathrm{subject\ to} \quad		    && 	\mat{H}_q\mat{W}_q = \tilde{\mat{D}}_{\vect{\gamma}}^{1/2}\sigma_\nu \;\; \forall q, \\
    &                                       &&   p_m \leq p_{\mathrm{max}} \;\; \forall m.
\end{aligned}
\end{equation}
As previously assumed to derive the conventional~\gls{zf}, we consider $\mathbf{(As1)}$, thus neglecting the constraints on $p_m$. With this assumption, the transmit power at the individual antennas could become large. In practice, one can consider a rescaling of the rows of the precoding matrices corresponding to the antennas that are exceeding the maximum allowed power value, with the consequent reduction of the users' \gls{sinr} requirements. 
We focus on \gls{pa} consumption because (i) the \gls{pa} is usually one of the main drivers of the \gls{bs} consumption~\cite{Auer11}, and (ii) the addition of other consumption terms would render the problem less tractable. However, in Section~\ref{sec:asymptotic_MC} we will show how the asymptotic analysis allows us to consider the \gls{bs} consumption while remaining very accurate in practical situations with a finite realistic number of subcarriers.

The solution to the above problem can be found by, e.g., applying the Lagrange multipliers method, and is given in Theorem~\ref{th:ZF_pcons_sol}.
\begin{theorem}
\label{th:ZF_pcons_sol}
Under $(\mathbf{As1})$, the solution to problem (\ref{eq:ZF_pcons}) for the subcarrier $q$ is given by
\begin{equation}
\label{eq:ZF_pcons_sol}
    \mat{W}_q = \mat{D}_{\vect{p}}^{1/2}\hr{\mat{H}_q}\left(\mat{H}_q\mat{D}_{\vect{p}}^{1/2}\hr{\mat{H}}_q\right)^{-1}\tilde{\mat{D}}_{\vect{\gamma}}^{1/2}\sigma_\nu
\end{equation}
where $\mat{D}_{\vect{p}} = \mathrm{diag}\left(p_0,\dotsc,p_{M-1}\right) = \mathrm{diag}(\vect{p})$ is the matrix containing the transmit powers at the \gls{bs} antennas, which are found by solving the fixed point equations
\begin{equation}
\label{eq:MC-MU_fpe}
    p_m = \sum_{k=0}^{K-1}\sum_{q=0}^{Q-1}\left|\left[\mat{D}_{\vect{p}}^{1/2}\hr{\mat{H}_q}\left(\mat{H}_q\mat{D}_{\vect{p}}^{1/2}\hr{\mat{H}_q}\right)^{-1}\tilde{\mat{D}}_{\vect{\gamma}}^{1/2}\sigma_\nu\right]_{m,k}\right|^2
\end{equation}
for $m = 0,\dotsc,M-1$.
\end{theorem}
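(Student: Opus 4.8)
The plan is to apply the method of Lagrange multipliers to the convex problem (\ref{eq:ZF_pcons}) under $(\mathbf{As1})$, i.e.\ after dropping the per-antenna power constraints. First I would form the Lagrangian by attaching a matrix of (complex) multipliers to the \gls{zf} equality constraint $\mat{H}_q\mat{W}_q = \tilde{\mat{D}}_{\vect{\gamma}}^{1/2}\sigma_\nu$ at each subcarrier $q$; call this multiplier matrix $\mat{\Lambda}_q\inC{K\times K}$. Writing the objective $p_\mathrm{PAs} = \alpha\sum_m p_m^{1/2}$ with $p_m = \sum_{k,q}\abs{w_{m,k,q}}^2$, I would then take the Wirtinger derivative $\partial/\partial w_{m,k,q}^*$ of the Lagrangian and set it to zero. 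The derivative of the objective with respect to $w_{m,k,q}^*$ is $\frac{\alpha}{2}p_m^{-1/2}w_{m,k,q}$, so the stationarity condition couples, for each antenna $m$, the precoding coefficient to a scalar factor $p_m^{-1/2}$ times a linear combination (through $\mat{\Lambda}_q$ and $\mat{H}_q$) of the channel entries. Collecting these per-subcarrier gives the matrix identity $\mat{W}_q = \mat{D}_{\vect{p}}^{1/2}\hr{\mat{H}_q}\tilde{\mat{\Lambda}}_q$ for some reparametrised multiplier $\tilde{\mat{\Lambda}}_q$ (absorbing constants), where the key structural point is that the antenna-dependent scaling is exactly $\mat{D}_{\vect{p}}^{1/2}$ — this is what makes the proof work and mirrors the structure in~\cite{Persson13,Peschiera22}.

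Next I would substitute this expression back into the equality constraint: $\mat{H}_q\mat{D}_{\vect{p}}^{1/2}\hr{\mat{H}_q}\tilde{\mat{\Lambda}}_q = \tilde{\mat{D}}_{\vect{\gamma}}^{1/2}\sigma_\nu$. Provided $\mat{H}_q\mat{D}_{\vect{p}}^{1/2}\hr{\mat{H}_q}$ is invertible (which holds generically for $M\geq K$ active antennas under $(\mathbf{As3})$, and which I would note as the regularity condition), this yields $\tilde{\mat{\Lambda}}_q = \left(\mat{H}_q\mat{D}_{\vect{p}}^{1/2}\hr{\mat{H}_q}\right)^{-1}\tilde{\mat{D}}_{\vect{\gamma}}^{1/2}\sigma_\nu$, and hence the closed form (\ref{eq:ZF_pcons_sol}). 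The vector $\vect{p}$ is still unknown at this stage, but it is not a free parameter: it must be consistent with its own definition $p_m = \sum_{k,q}\abs{w_{m,k,q}}^2$. Imposing this self-consistency on (\ref{eq:ZF_pcons_sol}) gives precisely the fixed point equations (\ref{eq:MC-MU_fpe}). I would also remark that convexity of (\ref{eq:ZF_pcons}) (the objective is a sum of square roots of convex quadratics, hence concave-looking but actually $p_m^{1/2}$ is concave — so care is needed) — in fact the cleanest route is to note the problem is equivalent, after the change of variables to the $p_m$'s and known \gls{zf}-type structure, to a convex program, so that the stationary point found is the global minimiser; alternatively one invokes that the \gls{kkt} conditions are necessary and the fixed point, when it exists, identifies the optimum.

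The main obstacle I anticipate is twofold. First, handling the Wirtinger calculus cleanly with the $p_m^{-1/2}$ factor: one must be careful that $p_m$ itself depends on $w_{m,k,q}$, but since $\partial p_m/\partial w_{m,k,q}^* = w_{m,k,q}$ and the chain rule through $p_m^{1/2}$ produces exactly the $\tfrac12 p_m^{-1/2}$ weight, the antenna-wise structure falls out without cross-terms between antennas — this needs to be stated precisely rather than hand-waved. Second, and more delicate, is the status of the fixed point equations (\ref{eq:MC-MU_fpe}): the theorem as stated asserts the solution is "found by solving" them, so strictly one should argue existence (and ideally uniqueness) of a fixed point, or at least that the Lagrange/\gls{kkt} stationarity system is equivalent to (\ref{eq:MC-MU_fpe}) together with (\ref{eq:ZF_pcons_sol}). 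I would treat existence via a standard argument — the map defined by the right-hand side of (\ref{eq:MC-MU_fpe}) is continuous and maps a suitable compact set into itself (total power is bounded because the \gls{zf} constraint fixes the effective channel), so Brouwer gives a fixed point — and defer the convergence of the iterative scheme itself to the numerical section, since the theorem only claims the characterisation, not the convergence rate.
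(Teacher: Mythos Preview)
Your proposal is correct and follows essentially the same approach as the paper's proof: form the Lagrangian with complex multipliers on the per-subcarrier \gls{zf} constraints, take Wirtinger derivatives to obtain the stationarity condition $w_{m,k,q} \propto p_m^{1/2}\sum_{k'}\lambda_{k',k,q}^* h_{k',m,q}^*$, rewrite in matrix form as $\mat{W}_q = \mat{D}_{\vect{p}}^{1/2}\hr{\mat{H}_q}\tilde{\mat{\Lambda}}_q$, eliminate the multipliers via the constraint, and then impose the self-consistency $p_m = \sum_{k,q}|w_{m,k,q}|^2$ to obtain~(\ref{eq:MC-MU_fpe}). One minor clarification on your convexity worry: although $t\mapsto t^{1/2}$ is concave, the composition $p_m^{1/2} = \big(\sum_{k,q}|w_{m,k,q}|^2\big)^{1/2}$ is simply the $\ell_2$ norm of the coefficients at antenna $m$ and hence convex in $\{w_{m,k,q}\}$, so the objective is a sum of norms and the problem is genuinely convex with affine constraints; this dissolves the difficulty you flag and makes the \gls{kkt} stationary point the global minimiser without needing the alternative change-of-variables argument you sketch.
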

\begin{proof}
See Appendix~\ref{proof1}.
\end{proof}
The system of fixed point equations can be solved through the fixed point iteration method, which is given in Algorithm~\ref{alg:fpim}. Once $\mat{D}_{\vect{p}}$ is known, it can be substituted in (\ref{eq:ZF_pcons_sol}) to find the precoding matrices.

\begin{algorithm}[h!]\linespread{1}
	\caption{Fixed point method to retrieve the powers at the \gls{bs} antennas}\label{alg:fpim}
	\small 
	\begin{algorithmic}
		\Require $\varepsilon, I_\mathrm{max}$ \Comment{Set tolerance and max.\ number of iterations}
		\State $i \gets 0$
		\For{$m\in\{0,\dotsc,M-1\}$} \Comment{Init.\ powers at the \gls{bs} antennas}
			\State $p_m^{(0)} \gets 1$ 
			\State $p_m^{(-1)} \gets \infty$ 
		\EndFor
		
		\While{$i<I_\mathrm{max} \textbf{ and } \max_m\left\{|p_m^{(i)}-p_m^{(i-1)}|\right\}>\varepsilon$}
			\State $\mat{D}_{\vect{p}} \gets \diag{p_0^{(i)},\dotsc,p_{M-1}^{(i)}}$
			\For{$m\in\{0,\dotsc,M-1\}$} \Comment{Update powers at the \gls{bs} antennas}
				\State $p_m^{(i+1)} \gets (\ref{eq:MC-MU_fpe})$
			\EndFor	
		\State $i\gets i+1$
		\EndWhile
	\end{algorithmic}
\end{algorithm}

Let us now analyze the single-user case.
\begin{corollary}
\label{co:mc-su}
Under $\mathbf{(As1)}$, the particularization of Theorem~\ref{th:ZF_pcons_sol} when $K=1$ gives the solution for the subcarrier $q$
\begin{equation}
\label{eq:eq:ZF_pcons_sol_SU}
    \mat{w}_q = \frac{\sigma_\nu\gamma^{1/2}}{Q^{1/2}}\frac{1}{\sum_{m'=1}^{M-1}|h_{m',q}|^2p_{m'}^{1/2}}\mat{D}_{\vect{p}}^{1/2}\vect{h}^*_q
\end{equation}
where $\vect{w}_q \in \mathbb{C}^{M \times 1}$ and $\vect{h}_q \in \mathbb{C}^{M \times 1}$ are the precoding and channel vectors at the subcarrier $q$, respectively. The fixed point equations in the powers per antenna are
\begin{equation}
\label{eq:MC-SU_fpe}
    p_m = \frac{\sigma_\nu^2\gamma}{Q} p_m\sum_{q=0}^{Q-1}\frac{\left|h_{m,q}\right|^2}{\left(\sum_{m'=0}^{M-1}\left|h_{m',q}\right|^2p_{m'}^{1/2}\right)^2}
\end{equation}
for $m=0,\dotsc,M-1$.
\end{corollary}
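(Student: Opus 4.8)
The plan is to obtain the statement as a direct specialization of Theorem~\ref{th:ZF_pcons_sol} to the case $K=1$, so that no new optimization problem needs to be set up or solved. When $K=1$ the channel matrix degenerates to a row vector $\mat{H}_q=\tp{\vect{h}_q}\inC{1\times M}$, the precoding matrix to a column vector $\mat{w}_q\inC{M\times 1}$, and the normalized \gls{sinr} matrix to the scalar $\tilde{\mat{D}}_{\vect{\gamma}}^{1/2}=(\gamma/Q)^{1/2}$. First I would substitute these reductions into the general precoder expression (\ref{eq:ZF_pcons_sol}): the bracketed ``inverse'' $\left(\mat{H}_q\mat{D}_{\vect{p}}^{1/2}\hr{\mat{H}_q}\right)^{-1}$ collapses to the reciprocal of the positive scalar $\sum_{m=0}^{M-1}\abs{h_{m,q}}^2 p_m^{1/2}$, and $\hr{\mat{H}_q}=\vect{h}_q^*$, which immediately yields (\ref{eq:eq:ZF_pcons_sol_SU}). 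As a sanity check, for $K=1$ the zero-interference constraint in (\ref{eq:ZF_pcons}) reduces to the single scalar equation $\tp{\vect{h}_q}\mat{w}_q=\sigma_\nu(\gamma/Q)^{1/2}$ per subcarrier, and one verifies by inspection that (\ref{eq:eq:ZF_pcons_sol_SU}) satisfies it.

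The second step is to specialize the fixed point system (\ref{eq:MC-MU_fpe}) in the same way. With $K=1$ the double sum over $(k,q)$ becomes a single sum over $q$, and the bracketed quantity is exactly the $m$-th entry of $\mat{w}_q$ from (\ref{eq:eq:ZF_pcons_sol_SU}), namely $\sigma_\nu(\gamma/Q)^{1/2}\,p_m^{1/2}\,h_{m,q}^*\big/\big(\sum_{m'=0}^{M-1}\abs{h_{m',q}}^2 p_{m'}^{1/2}\big)$. Taking its squared modulus factors out $p_m$ and $\abs{h_{m,q}}^2$, and summing over $q$ gives precisely (\ref{eq:MC-SU_fpe}). (Equivalently, one could re-run the Lagrange-multiplier argument of Appendix~\ref{proof1} with a scalar channel, but the specialization route is shorter and avoids duplicating that derivation.)

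There is essentially no analytical obstacle in this corollary; the only thing requiring care is the bookkeeping of dimensions and complex conjugates when passing from the matrix identities of Theorem~\ref{th:ZF_pcons_sol} to their scalar/vector counterparts --- in particular that $\hr{\mat{H}_q}$ is a column vector $\vect{h}_q^*$ and that $\mat{H}_q\mat{D}_{\vect{p}}^{1/2}\hr{\mat{H}_q}$ is a strictly positive real scalar (hence invertible) as soon as at least one antenna is active. I would also note, as a structural remark useful for the later analysis, that each equation in (\ref{eq:MC-SU_fpe}) carries the common factor $p_m$ on both sides, so $p_m=0$ (an inactive antenna) is always an admissible solution, while on the support of active antennas the system reduces to $\frac{\sigma_\nu^2\gamma}{Q}\sum_{q=0}^{Q-1}\abs{h_{m,q}}^2\big/\big(\sum_{m'=0}^{M-1}\abs{h_{m',q}}^2 p_{m'}^{1/2}\big)^2=1$.
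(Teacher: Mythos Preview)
Your proposal is correct and follows exactly the route the paper intends: the corollary is stated without a separate proof because it is a direct specialization of Theorem~\ref{th:ZF_pcons_sol} to $K=1$, and your dimension reductions and scalar simplifications of (\ref{eq:ZF_pcons_sol}) and (\ref{eq:MC-MU_fpe}) reproduce (\ref{eq:eq:ZF_pcons_sol_SU}) and (\ref{eq:MC-SU_fpe}) precisely. Your additional structural remark about the common factor $p_m$ on both sides of (\ref{eq:MC-SU_fpe}) is also exactly what the paper exploits in the subsequent \gls{los} corollary and in the narrowband single-user analysis.
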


In \gls{los} channels, the problem simplifies and this allows us to draw interesting insights.
\begin{corollary}
Under $\mathbf{(As1)}$, in pure \gls{los} channels (i.e., characterized by $|h_{m,q}|=1 \; \forall m,q$), and considering at least one $p_m$ different from zero, (\ref{eq:MC-SU_fpe}) reduces to
\begin{equation}
\label{eq:MC-SU_LOS}
    \sum_{m=0}^{M-1}p_{m}^{1/2} = \sigma_\nu\gamma^{1/2}.
\end{equation}
\end{corollary}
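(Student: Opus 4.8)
The plan is to start directly from the single-user fixed point equations~(\ref{eq:MC-SU_fpe}) and simply substitute the pure \gls{los} assumption $|h_{m,q}|^2 = 1$ for all $m,q$. The key observation is that, once all channel magnitudes equal one, the denominator $\left(\sum_{m'=0}^{M-1}|h_{m',q}|^2 p_{m'}^{1/2}\right)^2 = \left(\sum_{m'=0}^{M-1} p_{m'}^{1/2}\right)^2$ no longer depends on the subcarrier index $q$. Hence the sum over $q$ in~(\ref{eq:MC-SU_fpe}) collapses to a factor $Q$ divided by this $q$-independent quantity.

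Carrying this out, equation~(\ref{eq:MC-SU_fpe}) becomes
\begin{equation*}
    p_m = \frac{\sigma_\nu^2\gamma}{Q}\, p_m \cdot \frac{Q}{\left(\sum_{m'=0}^{M-1} p_{m'}^{1/2}\right)^2} = \frac{\sigma_\nu^2\gamma\, p_m}{\left(\sum_{m'=0}^{M-1} p_{m'}^{1/2}\right)^2}
\end{equation*}
for every $m = 0,\dotsc,M-1$. The next step is to pick an index $m$ for which $p_m \neq 0$ --- such an index exists by hypothesis --- and divide both sides by $p_m$, which yields $\left(\sum_{m'=0}^{M-1} p_{m'}^{1/2}\right)^2 = \sigma_\nu^2\gamma$. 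Since the per-antenna powers are nonnegative, taking the nonnegative square root gives $\sum_{m=0}^{M-1} p_m^{1/2} = \sigma_\nu\gamma^{1/2}$, which is~(\ref{eq:MC-SU_LOS}).

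There is no real obstacle here; the only point requiring a little care is the role of the hypothesis ``at least one $p_m$ different from zero.'' For antennas with $p_m = 0$ the fixed point equation is satisfied trivially and carries no information, so one genuinely needs a nonzero component to cancel $p_m$ and extract the scalar constraint on $\sum_m p_m^{1/2}$. I would state this explicitly so the reader sees why the hypothesis cannot be dropped, and also note in passing that the resulting condition pins down only the sum $\sum_m p_m^{1/2}$, leaving the individual power allocation among the active antennas undetermined in the pure \gls{los} case --- which is consistent with the ``all antennas are equally good'' interpretation discussed in the introduction.
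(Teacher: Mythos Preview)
Your proof is correct and is exactly the natural direct substitution the paper has in mind; the paper states this corollary without an explicit proof, treating the reduction of~(\ref{eq:MC-SU_fpe}) under $|h_{m,q}|=1$ as immediate, so your argument fills in precisely the intended steps.
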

\noindent Every transmit power allocation satisfying~(\ref{eq:MC-SU_LOS}) is, therefore, an optimal solution to the wideband single-user problem in the low-load case, and thus also the narrowband single-user solution. One has now freedom in the precoder design. All the power can be allocated to antenna $m$, which will give $p_m = \sigma_\nu^2\gamma$. A uniform allocation is another possible choice, with $p_m = p= \frac{\sigma_\nu^2\gamma}{M^2}$. Random allocations are also doable, as long as they fulfill equation~(\ref{eq:MC-SU_LOS}). In practice, however, it is better to activate the minimum number of antennas, so that the \gls{rf} chains of the non-active antennas can be turned off to save power.

\section{Narrowband System}
\label{sec:SC_system}

\subsection{Traditional Transmit Power Solution}

When $Q=1$ the system reduces to a narrowband one. The subcarrier index $q$ is then discarded. The conventional solution, minimizing the total transmit power, corresponds to
\begin{equation}
\label{eq:ZF_ptx_sol_SC}
    \mat{W} = \hr{\mat{H}}\left(\mat{H}\hr{\mat{H}}\right)^{-1}\mat{D}_{\vect{\gamma}}^{1/2}\sigma_\nu
\end{equation}
where $\mat{D}_{\vect{\gamma}} = \diag{\gamma_0,\dotsc,\gamma_{K-1}}$, while $\mat{H}\inC{K\times M}$ and $\mat{W}\inC{M\times K}$ are the single-carrier channel and precoding matrices, respectively.

\subsection{PAs Consumed Power Solution}

\begin{corollary}
Under $\mathbf{(As1)}$, it directly follows from Theorem~\ref{th:ZF_pcons_sol} that the precoding matrix for a narrowband system is
\begin{equation}
\label{eq:ZF_pcons_sol_SC}
    \mat{W} = \mat{D}_{\vect{p}}^{1/2}\hr{\mat{H}}\left(\mat{H}\mat{D}_{\vect{p}}^{1/2}\hr{\mat{H}}\right)^{-1}\mat{D}_{\vect{\gamma}}^{1/2}\sigma_\nu.
\end{equation}
The fixed point equation for the antenna $m$ is given by
\begin{equation}
\label{eq:SC-MU_fpe}
    p_m = \sum_{k=1}^{K}\left|\left[\mat{D}_{\vect{p}}^{1/2}\hr{\mat{H}}\left(\mat{H}\mat{D}_{\vect{p}}^{1/2}\hr{\mat{H}}\right)^{-1}\mat{D}_{\vect{\gamma}}^{1/2}\sigma_\nu\right]_{m,k}\right|^2.
\end{equation}
\end{corollary}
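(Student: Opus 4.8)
The plan is to obtain the statement as the $Q=1$ particularization of Theorem~\ref{th:ZF_pcons_sol}, so essentially no new work is required beyond a careful substitution. First I would observe that setting $Q=1$ in the wideband problem (\ref{eq:ZF_pcons}) recovers exactly the narrowband problem: the single subcarrier index $q=0$ can be suppressed, the \gls{zf} constraint $\mat{H}_q\mat{W}_q = \tilde{\mat{D}}_{\vect{\gamma}}^{1/2}\sigma_\nu$ becomes $\mat{H}\mat{W} = \mat{D}_{\vect{\gamma}}^{1/2}\sigma_\nu$ because $\tilde{\mat{D}}_{\vect{\gamma}} = \diag{\frac{\gamma_0}{Q},\dotsc,\frac{\gamma_{K-1}}{Q}}$ collapses to $\mat{D}_{\vect{\gamma}} = \diag{\gamma_0,\dotsc,\gamma_{K-1}}$ at $Q=1$, and the cost $\alpha\sum_m p_m^{1/2}$ with $p_m = \sum_k\sum_q|w_{m,k,q}|^2$ reduces to $\alpha\sum_m(\sum_k|w_{m,k}|^2)^{1/2}$. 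Since $\mathbf{(As1)}$ is in force in both settings, Theorem~\ref{th:ZF_pcons_sol} applies verbatim and its conclusions can simply be read off at $Q=1$.

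Next I would carry out the substitution in the two displayed formulas of the theorem. Equation (\ref{eq:ZF_pcons_sol}) with $Q=1$ and $\tilde{\mat{D}}_{\vect{\gamma}} \to \mat{D}_{\vect{\gamma}}$ gives directly
\begin{equation*}
\mat{W} = \mat{D}_{\vect{p}}^{1/2}\hr{\mat{H}}\left(\mat{H}\mat{D}_{\vect{p}}^{1/2}\hr{\mat{H}}\right)^{-1}\mat{D}_{\vect{\gamma}}^{1/2}\sigma_\nu,
\end{equation*}
which is (\ref{eq:ZF_pcons_sol_SC}). In the fixed point system (\ref{eq:MC-MU_fpe}) the outer sum $\sum_{q=0}^{Q-1}$ contains only the term $q=0$, so it disappears, and $\tilde{\mat{D}}_{\vect{\gamma}}^{1/2}\sigma_\nu$ again becomes $\mat{D}_{\vect{\gamma}}^{1/2}\sigma_\nu$, yielding (\ref{eq:SC-MU_fpe}); the only cosmetic difference is the relabeling of the user index from the range $0,\dotsc,K-1$ to $1,\dotsc,K$, which is immaterial.

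There is no genuine obstacle here --- the result is phrased as a corollary precisely because the narrowband problem is a strict special case of the wideband one already solved in Theorem~\ref{th:ZF_pcons_sol}. If a self-contained argument were preferred, one could instead replay the Lagrange-multiplier derivation of Appendix~\ref{proof1} with $Q=1$ from the outset; this would reproduce the same stationarity conditions and hence the same precoder and fixed point equation, but it merely duplicates work already done. The single point worth a sentence of care is checking that the \gls{sinr} normalization by $Q$ encoded in $\tilde{\mat{D}}_{\vect{\gamma}}$ is exactly undone at $Q=1$, so that the narrowband targets $\gamma_k$ appear un-normalized, consistently with the conventional transmit-power solution (\ref{eq:ZF_ptx_sol_SC}).
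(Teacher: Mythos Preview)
Your proposal is correct and matches the paper's approach exactly: the corollary is stated without proof in the paper, as it is simply the $Q=1$ specialization of Theorem~\ref{th:ZF_pcons_sol}, and your substitution (including the observation that $\tilde{\mat{D}}_{\vect{\gamma}}$ collapses to $\mat{D}_{\vect{\gamma}}$ and the $q$-sum disappears) is precisely what is needed.
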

The solution can be retrieved in the same way as the wideband system, first computing the powers per antenna and substituting them in (\ref{eq:ZF_pcons_sol_SC}).

The single-user case can now be discussed.
\begin{corollary}
Under $\mathbf{(As1)}$, by assuming different channel gains among the antennas and by defining $\hat{m}=\argmax_{m} |h_m|$, the particularization of Corollary~\ref{co:mc-su} when $Q=1$ gives the single-user solution
\begin{equation}
w_m = 
\begin{cases}
   \sigma_\nu\gamma^{1/2}\frac{1}{|h_{m}|^2}h_m^* & \mathrm{if}\ m = \hat{m}\\
    0 & \mathrm{otherwise}
\end{cases}.
\end{equation}
All the power is therefore allocated to the antenna with the largest channel gain.
\end{corollary}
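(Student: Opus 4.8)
The plan is to specialize Corollary~\ref{co:mc-su} to the case $Q=1$ (dropping the subcarrier index $q$) and solve the resulting single fixed-point equation directly, exploiting the fact that in the narrowband single-user case the powers $p_m$ decouple in a very simple way. Setting $Q=1$ in~(\ref{eq:MC-SU_fpe}) gives, for each antenna $m$,
\begin{equation*}
    p_m = \sigma_\nu^2\gamma\, p_m\,\frac{|h_m|^2}{\left(\sum_{m'=0}^{M-1}|h_{m'}|^2 p_{m'}^{1/2}\right)^2}.
\end{equation*}
Let $S = \sum_{m'=0}^{M-1}|h_{m'}|^2 p_{m'}^{1/2}$, a single scalar common to all equations. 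For each $m$ with $p_m \neq 0$ we may cancel $p_m$ and obtain $|h_m|^2 = S^2/(\sigma_\nu^2\gamma)$, i.e. $|h_m|^2$ must take the same value for every active antenna. Since the channel gains $|h_m|$ are assumed pairwise distinct, at most one antenna can be active, and any nonzero feasible solution activates exactly the antenna $\hat m = \argmax_m |h_m|$ (activating any single antenna is feasible, and the zero solution is infeasible because it violates the \gls{zf}/\gls{sinr} constraint). This is the main structural step.

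Having established that only antenna $\hat m$ is active, I would plug $m=\hat m$ back into the precoder expression~(\ref{eq:eq:ZF_pcons_sol_SU}) with $Q=1$, or equivalently use the narrowband precoder~(\ref{eq:ZF_pcons_sol_SC}) with $\mat{D}_{\vect p}$ supported only on the $\hat m$-th entry. With a single active antenna the matrix $\mat{H}\mat{D}_{\vect p}^{1/2}\hr{\mat{H}}$ collapses to the scalar $p_{\hat m}^{1/2}|h_{\hat m}|^2$, so the precoder reduces to $w_{\hat m} = p_{\hat m}^{1/2}\cdot p_{\hat m}^{1/2} h_{\hat m}^* \cdot (p_{\hat m}^{1/2}|h_{\hat m}|^2)^{-1}\sigma_\nu\gamma^{1/2} = \sigma_\nu\gamma^{1/2} h_{\hat m}^*/|h_{\hat m}|^2$, and $w_m = 0$ for $m \neq \hat m$. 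One checks this indeed meets the constraint $h_{\hat m} w_{\hat m} = \sigma_\nu\gamma^{1/2}$, confirming a valid \gls{zf} solution delivering the target \gls{sinr} $\gamma$.

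It remains to confirm that this single-antenna allocation is optimal and not merely feasible, since the argument above only shows it is the unique nonzero fixed point; I would note that problem~(\ref{eq:ZF_pcons}) is convex (the objective $\alpha\sum_m p_m^{1/2}$ is concave, but it is being handled via the Lagrange/KKT conditions already invoked in Theorem~\ref{th:ZF_pcons_sol}, whose stationarity conditions are exactly the fixed-point equations), so the unique stationary point is the solution. Alternatively, one can give a direct one-line comparison: for the single-user \gls{zf} constraint $\sum_m h_m w_m = \sigma_\nu\gamma^{1/2}$, concentrating all transmit power on the strongest antenna minimizes $\sum_m |w_m|$ and hence minimizes $\alpha\sum_m p_m^{1/2}$ among the extreme allocations, with the equal-gain degeneracy ruled out by the distinct-gain assumption. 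The main obstacle is purely presentational — making the cancellation-of-$p_m$ step rigorous (handling the "which antennas are active" case distinction cleanly) and justifying that the fixed-point/KKT stationary point is the global optimum rather than re-deriving it from scratch; both are light given the results already available in the excerpt.
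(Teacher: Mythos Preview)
Your argument is essentially the paper's: specialize the fixed-point equation to $Q=1$, observe that the common left-hand side forces all active antennas to share the same $|h_m|$, conclude that at most one antenna is active under the distinct-gain assumption, and then select $\hat m$ by directly comparing the resulting consumed powers $\alpha\sigma_\nu\gamma^{1/2}/|h_m|$. One small slip: your phrase ``the unique nonzero fixed point'' is not accurate---each single-antenna activation (with $p_j^{1/2}=\sigma_\nu\gamma^{1/2}/|h_j|$) is a fixed point, so there are $M$ of them; the paper handles this exactly as you do in your final paragraph, by a direct one-line comparison of the $M$ candidate objective values rather than by any uniqueness/KKT claim.
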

\begin{proof}
From Corollary~\ref{co:mc-su}, when $Q=1$ we obtain
\begin{equation}
    \mat{w} = \sigma_\nu\gamma^{1/2}\frac{1}{\sum_{m'=0}^{M-1}|h_{m'}|^2p_{m'}^{1/2}}\mat{D}_{\vect{p}}^{1/2}\vect{h}^*
\end{equation}
where $\vect{w} \in \mathbb{C}^{M \times 1}$ and $\vect{h} \in \mathbb{C}^{M \times 1}$ are the single-carrier precoding and channel vectors, respectively. The fixed point equation for the antenna $m$ is
\begin{equation}
\label{eq:fpe_SC-SU_2}
    \sum_{m'=0}^{M-1}|h_{m'}|^2p_{m'}^{1/2} = \sigma_\nu\gamma^{1/2}|h_m|.
\end{equation}
For all the activated antennas, there is a fixed point equation. However, when the channel gains $|h_m|$ are different, (\ref{eq:fpe_SC-SU_2}) cannot be solved since the left-hand side does not depend on $m$, while the right-hand side does. This actually shows that the optimal solution corresponds to using only one antenna, the one with the strongest channel gain~\cite{Peschiera22}.\footnote{When only one antenna is activated, the consumed power equals $\frac{\sigma_\nu\gamma^{1/2}}{|h_m|}$, which is minimized when using the antenna with the highest channel gain.} This solution makes sense since the maximal per-antenna power constraint is not considered here. 
\end{proof}

If several antennas share the same maximal channel gains, they can be all activated while still solving equation~(\ref{eq:fpe_SC-SU_2}). Equation~(\ref{eq:fpe_SC-SU_2}) reduces to equation~(\ref{eq:MC-SU_LOS}) among the antennas sharing the same maximal channel gain. In pure \gls{los} channels, we obtain again~(\ref{eq:MC-SU_LOS}), this time among all the antennas. In both cases, one has freedom of choice in the precoder design.

When the per-antenna power constraints are active, instead, the optimal strategy is to progressively saturate the antennas with the highest channel gains until the \gls{qos} is achieved~\cite{Persson13}.

\section{Asymptotic Wideband System}
\label{sec:asymptotic_MC}

We now assume a large value of $Q$, which makes sense given that in \gls{4g} and \gls{5g} systems the number of active subcarriers varies from $72$ to $1320$ and from $132$ to $3300$, respectively~\cite{Dahlman20}. The asymptotic results are validated numerically, and we show how they prove to be accurate even for a finite and relatively low number of subcarriers. The closed-form expressions of the consumed powers are obtained for the \gls{iid} Rayleigh fading channel given in $\mathbf{(As3)}$.

\subsection{\glspl{pa} Consumed Power}
The following theorem gives the asymptotic expression of the \glspl{pa} consumed power by the precoder (\ref{eq:ZF_pcons_sol}). This characterization is possible because, in the asymptotic wideband regime, all the \gls{bs} antennas are approximately allocated the same power. 
\begin{theorem}
\label{th:p_PAs_asympt}
Under $\mathbf{(As1)}-\mathbf{(As3)}$, the transmit power allocated to each antenna by the precoder~(\ref{eq:ZF_pcons_sol}) converges to
\begin{equation}
\label{eq:per-antenna_ptx}
p_m \to \overline{p} = \frac{1}{M(M-K)}\mathrm{tr}\left(\mat{D}_{\vect{\beta}}^{-1}\mat{D}_{\vect{\gamma}}\sigma_\nu^2\right)
\end{equation}
implying that the total power consumed by the \glspl{pa} satisfies
\begin{equation}
\label{eq:p_cons_asympt}
    p_\mathrm{PAs} \to \overline{p_\mathrm{PAs}} = \alpha\left(\frac{M}{M-K}\mathrm{tr}\left(\mat{D}_{\vect{\beta}}^{-1}\mat{D}_{\vect{\gamma}}\sigma_\nu^2\right)\right)^{1/2}.
\end{equation}
The per-subcarrier \gls{zf} precoder is therefore found back
\begin{equation}
    \mat{W}_q = \hr{\mat{H}_q}\left(\mat{H}_q\hr{\mat{H}_q}\right)^{-1}\tilde{\mat{D}}_{\vect{\gamma}}^{1/2}\sigma_\nu.
\end{equation}
\begin{proof}
See Appendix~\ref{proof2}.
\end{proof}
\end{theorem}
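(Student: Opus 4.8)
The plan is to analyze the fixed-point equations~(\ref{eq:MC-MU_fpe}) in the regime $Q\to\infty$ under the uncorrelated Rayleigh model $\mathbf{(As3)}$, and to show that the symmetric allocation $p_m=\overline p$ for all $m$ is the solution they converge to. First I would substitute the ansatz $\mat{D}_{\vect p}=\overline p\,\mat{I}_M$ into the right-hand side of~(\ref{eq:MC-MU_fpe}); the $\overline p^{1/2}$ factors cancel between $\mat{D}_{\vect p}^{1/2}$ and $\bigl(\mat{H}_q\mat{D}_{\vect p}^{1/2}\hr{\mat{H}_q}\bigr)^{-1}$, so that the bracketed quantity collapses to the ordinary per-subcarrier \gls{zf} precoder $\hr{\mat{H}_q}(\mat{H}_q\hr{\mat{H}_q})^{-1}\tilde{\mat D}_{\vect\gamma}^{1/2}\sigma_\nu$ scaled by $\overline p^{1/2}$. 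The self-consistency condition then reads
\begin{equation}
\overline p = \overline p\cdot\frac{1}{Q}\sum_{q=0}^{Q-1}\Bigl[\hr{\mat{H}_q}\bigl(\mat{H}_q\hr{\mat{H}_q}\bigr)^{-1}\mat{D}_{\vect\gamma}\bigl(\mat{H}_q\hr{\mat{H}_q}\bigr)^{-1}\mat{H}_q\Bigr]_{m,m}\sigma_\nu^2,
\end{equation}
so it suffices to show that the empirical average of these diagonal entries tends to $1/M$ (independently of $m$) as $Q\to\infty$, which also forces the value $\overline p$ in~(\ref{eq:per-antenna_ptx}) by a separate trace/normalization argument.

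Next I would establish the limit of the empirical average. The key observation is that for a Rayleigh matrix $\mat{H}_q=\mat{D}_{\vect\beta}^{1/2}\mat{G}_q$, the matrix $\mat{P}_q:=\hr{\mat{H}_q}(\mat{H}_q\hr{\mat{H}_q})^{-1}\mat{H}_q$ is the orthogonal projector onto the $K$-dimensional row space of $\mat{H}_q$ (equivalently of $\mat{G}_q$), so $\mathrm{tr}(\mat{P}_q)=K$ and, by the rotational invariance of the isotropic Gaussian ensemble, $\expt{[\mat{P}_q]_{m,m}}=K/M$ for every $m$. To get the $1/M$ normalization I instead need the weighted version $\hr{\mat{H}_q}(\mat{H}_q\hr{\mat{H}_q})^{-1}\mat{D}_{\vect\gamma}(\mat{H}_q\hr{\mat{H}_q})^{-1}\mat{H}_q$; writing $\mat{H}_q\hr{\mat{H}_q}=\mat{D}_{\vect\beta}^{1/2}\mat{G}_q\hr{\mat{G}_q}\mat{D}_{\vect\beta}^{1/2}$ and using that $\mat{G}_q\hr{\mat{G}_q}\sim\mathcal{CW}(\mat{I}_K,M,K)$, the expectation of its inverse is the inverse-Wishart mean $\frac{1}{M-K}\mat{I}_K$; combining this with the column structure $\hr{\mat{G}_q}(\cdot)\mat{G}_q$ and the large-$M$ concentration of $\mat{G}_q\hr{\mat{G}_q}\approx M\mat{I}_K$ yields diagonal entries concentrating at $\frac{1}{M(M-K)}\mathrm{tr}(\mat{D}_{\vect\beta}^{-1}\mat{D}_{\vect\gamma}\sigma_\nu^2)/\sigma_\nu^2$ per antenna. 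Averaging over $q$, the law of large numbers (valid even with inter-subcarrier correlation, since the marginal law of each $\mat{H}_q$ is the same and a weak-dependence/ergodicity argument controls the variance of the average) gives the deterministic limit, so the fixed-point equation is satisfied by $\overline p$ in~(\ref{eq:per-antenna_ptx}).

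Finally, with $p_m\to\overline p$ for all $m$, the matrix $\mat{D}_{\vect p}^{1/2}$ in~(\ref{eq:ZF_pcons_sol}) becomes a scalar multiple of the identity and cancels out exactly as above, recovering the per-subcarrier \gls{zf} precoder; plugging $p_m=\overline p$ into $p_\mathrm{PAs}=\alpha\sum_m p_m^{1/2}=\alpha M\overline p^{1/2}$ and simplifying gives~(\ref{eq:p_cons_asympt}). I also need to argue that this symmetric fixed point is the one Algorithm~\ref{alg:fpim} actually reaches — i.e. uniqueness/attractiveness of the solution in the relevant regime — which I would do either by a contraction/monotonicity property of the fixed-point map near the symmetric point, or by appealing to the convexity of the original problem~(\ref{eq:ZF_pcons}) (the KKT conditions of which the fixed-point equations encode), so that any solution of the equations is the global optimum and the symmetric one is therefore it. The main obstacle is the second step: making the passage from the finite-$M$, finite-$Q$ random diagonal entries to the deterministic limit rigorous — in particular handling the expectation of the weighted inverse-Wishart quadratic form and justifying the averaging over possibly correlated subcarriers — rather than the algebra, which is routine once the concentration statements are in hand.
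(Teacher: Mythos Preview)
Your self-consistency equation contains an algebraic slip that derails the argument. With $\mat{D}_{\vect p}=\overline p\,\mat{I}_M$ one has $\mat{D}_{\vect p}^{1/2}\hr{\mat{H}_q}\bigl(\mat{H}_q\mat{D}_{\vect p}^{1/2}\hr{\mat{H}_q}\bigr)^{-1}=\overline p^{1/2}\hr{\mat{H}_q}\bigl(\overline p^{1/2}\mat{H}_q\hr{\mat{H}_q}\bigr)^{-1}=\hr{\mat{H}_q}\bigl(\mat{H}_q\hr{\mat{H}_q}\bigr)^{-1}$: the $\overline p^{1/2}$ factors cancel \emph{completely}, not leaving a residual scaling. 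The fixed-point relation is therefore
\[
\overline p=\frac{\sigma_\nu^2}{Q}\sum_{q=0}^{Q-1}\Bigl[\hr{\mat{H}_q}\bigl(\mat{H}_q\hr{\mat{H}_q}\bigr)^{-1}\mat{D}_{\vect\gamma}\bigl(\mat{H}_q\hr{\mat{H}_q}\bigr)^{-1}\mat{H}_q\Bigr]_{m,m},
\]
with no $\overline p$ on the right. This equation \emph{determines} $\overline p$ outright; there is no degeneracy to resolve by a ``separate trace/normalization argument'', and the target limit for the diagonal entry is $\frac{1}{M(M-K)}\tr{\mat{D}_{\vect\beta}^{-1}\mat{D}_{\vect\gamma}}$, not $1/M$.

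Once this is fixed, your route and the paper's coincide in spirit but differ in execution. The paper sums the $M$ equations, converts the sum of diagonals into a trace, uses cyclicity to reduce the integrand to $\tr{(\mat{H}_q\hr{\mat{H}_q})^{-1}\mat{D}_{\vect\gamma}}$, applies the law of large numbers over $q$, and then the inverse-Wishart mean $\expt{(\mat{H}_q\hr{\mat{H}_q})^{-1}}=\frac{1}{M-K}\mat{D}_{\vect\beta}^{-1}$; no large-$M$ concentration of $\mat{G}_q\hr{\mat{G}_q}$ is invoked. Your column-exchangeability observation for $\mat{G}_q$ (so every diagonal entry has expectation equal to $1/M$ of the trace) is exactly what justifies going from the summed trace identity back to a per-antenna statement, a step the paper leaves implicit. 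On uniqueness, the paper does not provide a contraction argument either: it posits the uniform allocation, checks consistency, and relies on convexity of problem~(\ref{eq:ZF_pcons}) plus numerical validation rather than a formal attractor proof.
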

\noindent Looking at (\ref{eq:p_cons_asympt}), the power consumed by the \glspl{pa} is a monotonically decreasing function of $M$: activating more antennas is then always beneficial. However, the marginal gain of activating more antennas decreases as $M$ increases since $\frac{M}{M-K}$ converges to a unit constant.

\subsection{\gls{bs} Consumed Power}

Considering the power consumption of the \gls{bs}, the optimal precoder does not necessarily activate all the antennas. Indeed, the power consumed by the \gls{bs} circuits induces a penalty on the number of active antennas $M_\mathrm{a}$. The general problem, considering the \gls{bs} consumption and the per-antenna power constraints, consists in solving
\begin{equation}
\label{eq:p_cons,BS_papc_gen}
\begin{aligned}
    & \minimize_{\substack{\{w_{m,k,q}\},M_\mathrm{a}\\m=0,\dotsc,M_\mathrm{a}-1}}  \quad  	&& 	p_\mathrm{BS} = \alpha\sum_{m=0}^{M_\mathrm{a}-1}p_m^{1/2}+p_{\mathrm{fix}}+\mathcal{C}M_\mathrm{a} \\
    & \mathrm{\ subject\ to} \quad		    && 	\mat{H}_q\mat{W}_q = \tilde{\mat{D}}_{\vect{\gamma}}^{1/2}\sigma_\nu \;\; \forall q, \\
    &                                       &&   p_m \leq p_{\mathrm{max}} \;\; \forall m, \\
    &                                       &&   M_\mathrm{a} \leq M. \\
\end{aligned}
\end{equation}
Without loss of generality, one can first optimize with respect to the precoding coefficients considering a generic number of active antennas $M_\mathrm{a}$, and then find the optimal integer number of active antennas.

Let us consider the power allocation defined in Theorem~\ref{th:p_PAs_asympt} among the $M_\mathrm{a}$ active antennas. By combining the \gls{bs} consumption model in~(\ref{eq:total_consumed_power_BS}) with the asymptotic \glspl{pa} consumption in~(\ref{eq:p_cons_asympt}), the asymptotic \gls{bs} consumption is computed as
\begin{equation}
\label{eq:p_cons,BS_asympt}
p_\mathrm{BS} \to \overline{p_\mathrm{BS}} = \alpha\left(\frac{M_\mathrm{a}}{M_\mathrm{a}-K}\mathrm{tr}\left(\mat{D}_{\vect{\beta}}^{-1}\mat{D}_{\vect{\gamma}}\sigma_\nu^2\right)\right)^{1/2} + p_{\mathrm{fix}}+\mathcal{C}M_\mathrm{a}.
\end{equation}
The above quantity is the solution, under $\mathbf{(As1)-(As3)}$, to the minimization problem~(\ref{eq:p_cons,BS_papc_gen}) with respect to the precoding coefficients and considering a fixed number of active antennas $M_\mathrm{a}$.
The corresponding precoder is a per-subcarrier \gls{zf} precoder among the $M_\mathrm{a}$ active antennas.

We can now optimize the number of active antennas $M_\mathrm{a}$.

\subsubsection{Optimal $M_\mathrm{a}$ without Power Constraints}
In this case, we can directly minimize~(\ref{eq:p_cons,BS_asympt}) and check that the solution is within the allowed bounds.
\begin{lemma}
\label{lemma1}
Under $\mathbf{(As1)}-\mathbf{(As3)}$, the optimal integer number of active antennas $M^\star_\mathrm{a}\in\mathbb{N}$, $K+1\leq M^\star_\mathrm{a}\leq M$, minimizing $\overline{p_\mathrm{BS}}$ is given by
\begin{equation}
\label{eq:M_a-star}
M_\mathrm{a}^\star = \left\lfloor \min{\left\{\max{\left\{\tilde{M}_\mathrm{a},K+1\right\}},M\right\}} \right\rceil
\end{equation}
where $\tilde{M}_\mathrm{a}=x$, $x$ being the solution to the quartic equation
\begin{equation}
\label{eq:quartic_eq}
x(x-K)^3-\frac{\alpha\left(\tr{\mat{D}_{\vect{\beta}}^{-1}\mat{D}_{\vect{\gamma}}\sigma_\nu^2}\right)^{1/2}K}{2\mathcal{C}}=0
\end{equation}
for $x>K$. Closed-form solutions to (\ref{eq:quartic_eq}) can be found via~\cite{Shmakov11}.
\begin{proof}
See Appendix~\ref{proof3}.
\end{proof}
\end{lemma}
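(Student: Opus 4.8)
The plan is to solve (\ref{eq:p_cons,BS_papc_gen}) in nested stages. Since the text has already established that, under $\mathbf{(As1)-(As3)}$ and for a \emph{fixed} number of active antennas $M_\mathrm{a}$, the minimum over the precoding coefficients equals $\overline{p_\mathrm{BS}}$ in (\ref{eq:p_cons,BS_asympt}), all that remains is to minimize the scalar function
\begin{equation}
\phi(x)=\alpha\left(\frac{x}{x-K}\,T\right)^{1/2}+\mathcal{C}x+p_{\mathrm{fix}},\qquad T:=\tr{\mat{D}_{\vect{\beta}}^{-1}\mat{D}_{\vect{\gamma}}\sigma_\nu^2}>0,
\end{equation}
first over the real interval $x>K$, then over the box $[K+1,M]$, then over the integers in that box. (The limit $K+1$ is simply the smallest number of active antennas for which the asymptotic \gls{zf} power $\propto M_\mathrm{a}/(M_\mathrm{a}-K)$ stays finite, so it is a natural lower end of the feasible set.)

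\emph{Step 1: real relaxation.} I would first show $\phi$ is strictly convex on $(K,\infty)$: the term $\mathcal{C}x$ is linear, and a direct computation of the second derivative of $x\mapsto\sqrt{x/(x-K)}$ shows it is positive for $x>K$ (equivalently, $\phi'$ is strictly increasing there). Differentiating, $\phi'(x)=\mathcal{C}-\tfrac{\alpha K\,T^{1/2}}{2\,(x-K)^{3/2}x^{1/2}}$, which tends to $-\infty$ as $x\downarrow K$ and to $\mathcal{C}>0$ as $x\to\infty$; hence $\phi$ has a unique stationary point $\tilde M_\mathrm{a}\in(K,\infty)$, and it is the global minimizer of the relaxation. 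Setting $\phi'(x)=0$, isolating the term carrying the half-integer exponents, and squaring — a reversible step on $(K,\infty)$ since both sides are positive there — yields a polynomial equation in $x$, namely the quartic (\ref{eq:quartic_eq}). Because $x\mapsto x(x-K)^3$ is continuous and strictly increasing from $0$ to $\infty$ on $(K,\infty)$, exactly one root of that quartic exceeds $K$; that root is $\tilde M_\mathrm{a}$, and it admits the closed form of \cite{Shmakov11}.

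\emph{Step 2: box and integrality.} Strict convexity of $\phi$ on $(K,\infty)$ makes its minimizer over $[K+1,M]$ the projection of $\tilde M_\mathrm{a}$ onto that interval, i.e.\ $\min\{\max\{\tilde M_\mathrm{a},K+1\},M\}$ (if $\tilde M_\mathrm{a}$ already lies in the interval it is optimal; otherwise convexity pushes the optimum to the nearer endpoint). Finally, convexity — hence unimodality — of $\phi$ on $[K+1,M]$ forces the optimal integer to be one of the two integers bracketing that continuous minimizer, which is exactly what the operator $\lfloor\cdot\rceil$ returns, since by its definition it picks, among the two nearest integers, the one with the smaller value of the objective $\phi$. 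Performing the clipping \emph{before} the rounding is what preserves feasibility, ruling out rounding down to $K\notin[K+1,M]$. Composing the three stages gives (\ref{eq:M_a-star}).

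The derivative, the quartic, and its closed-form root are routine; the points that need care are (i) the strict convexity of $\phi$ on $(K,\infty)$, which is what licenses both the ``unique stationary point'' claim and the projection-then-rounding argument, and (ii) checking that squaring $\phi'(x)=0$ introduces no spurious solution inside $(K,\infty)$ and that the relevant quartic root is real and $>K$. I expect (i)--(ii) to be the main, though fairly modest, obstacle; everything else is bookkeeping over the feasible set.
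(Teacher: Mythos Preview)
Your proposal is correct and follows essentially the same route as the paper: establish convexity of the continuous relaxation on $(K,\infty)$, obtain the quartic from the first-order condition, and exploit convexity to justify the projection onto $[K+1,M]$ followed by the $\lfloor\cdot\rceil$ rounding. The only cosmetic difference is that you certify uniqueness of the relevant root via monotonicity of $x\mapsto x(x-K)^3$ on $(K,\infty)$, whereas the paper invokes Descartes' rule of signs.
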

Two opposite cases can occur: $M_\mathrm{a}^\star=K+1$ (circuits power-limited regime) and $M_\mathrm{a}^\star=M$ (\glspl{pa} power-limited regime). Note that the ceil-floor operator $\left\lfloor\cdot\right\rceil$ is evaluated after having checked the bounds. Indeed, if $M_\mathrm{a}^\star$ is equal to one of the two extremes, the ceil-floor operator becomes trivial.

\subsubsection{Optimal $M_\mathrm{a}$ with Power Constraints}
Up to now, there is no guarantee that the antenna powers by using $M_\mathrm{a}^\star$ antennas satisfy the per-antenna power constraints. One has to solve problem~(\ref{eq:p_cons,BS_papc_gen}) by relaxing $\mathbf{(As1)}$. In Appendix~\ref{proof4} we show how, under $\mathbf{(As2)}$ and $\mathbf{(As3)}$, the per-antenna power constraints reduce to $\overline{p}\leq p_\mathrm{max}$, where $\overline{p}$ is the asymptotic transmit power at each active antenna given in~(\ref{eq:per-antenna_ptx}). It is therefore sufficient to compute the continuous number of antennas that gives exactly $p_\mathrm{max}$ as transmit power, and then apply the ceiling operator to prevent the constraint from being violated. This value is then included as a lower bound on the final solution.

In the following, we assume that problem (\ref{eq:p_cons,BS_papc_gen}) is feasible, i.e., activating all antennas is sufficient to meet the \gls{qos} constraints. This is expressed, using~(\ref{eq:per-antenna_ptx}), by the following condition:

$(\mathbf{As4)}$: $\frac{1}{M(M-K)}\mathrm{tr}\left(\mat{D}_{\vect{\beta}}^{-1}\mat{D}_{\vect{\gamma}}\sigma_\nu^2\right) \leq p_\mathrm{max}$, 
\\which ensures that activating $M_\mathrm{a}=M$ antennas does not violate the per-antenna power constraints.

\begin{theorem}
\label{th:Ma_dag}
Under $\mathbf{(As2)-(As4)}$, the optimal integer number of active antennas solving problem (\ref{eq:p_cons,BS_papc_gen}) is given by
\begin{equation}
\label{eq:Mdag}
M_\mathrm{a}^\dag = \left\lfloor \min{\left\{\max{\left\{\tilde{M}_\mathrm{a},K+1,\hat{M}_\mathrm{a}\right\}},M\right\}} \right\rceil
\end{equation}
where $\tilde{M}_\mathrm{a} =  x $, $x$ being the solution to (\ref{eq:quartic_eq}), and
\begin{equation}
\label{eq:Mhat}
    \hat{M}_\mathrm{a} = 
    \left\lceil\frac{1}{2}\left(K+\left(K^2+\frac{4\tr{\mat{D}_{\vect{\beta}}^{-1}\mat{D}_{\vect{\gamma}}\sigma_\nu^2}}{p_\mathrm{max}}\right)^{1/2}\right)\right\rceil.
\end{equation}
The quantity $\hat{M}_\mathrm{a}$ is the minimal integer number of active antennas required to satisfy the per-antenna power constraint and achieve the users' \gls{qos}.
\end{theorem}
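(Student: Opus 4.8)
The plan is to solve problem~(\ref{eq:p_cons,BS_papc_gen}) in two nested layers, exactly as for Lemma~\ref{lemma1}, but now retaining the per-antenna power constraints. For the inner layer --- optimizing the precoding coefficients with the number of active antennas $M_\mathrm{a}$ held fixed --- the discussion preceding the theorem together with Appendix~\ref{proof4} already supply everything needed: under $\mathbf{(As2)}$--$\mathbf{(As3)}$ the best achievable \gls{bs} consumption with exactly $M_\mathrm{a}$ active antennas is $\overline{p_\mathrm{BS}}(M_\mathrm{a})$ as in~(\ref{eq:p_cons,BS_asympt}), attained by per-subcarrier \gls{zf} over the active antennas, and the whole bundle of per-antenna power constraints collapses to the single scalar inequality $\overline{p}(M_\mathrm{a})\leq p_\mathrm{max}$ with $\overline{p}(M_\mathrm{a})=\frac{1}{M_\mathrm{a}(M_\mathrm{a}-K)}\tr{\mat{D}_{\vect{\beta}}^{-1}\mat{D}_{\vect{\gamma}}\sigma_\nu^2}$. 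Hence the remaining task is to minimize $\overline{p_\mathrm{BS}}(M_\mathrm{a})$ over integers $M_\mathrm{a}$ subject to $K+1\leq M_\mathrm{a}\leq M$ and $\overline{p}(M_\mathrm{a})\leq p_\mathrm{max}$; assumption $\mathbf{(As4)}$ makes $M_\mathrm{a}=M$ feasible, so this set is nonempty.

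Next I would turn the power constraint into an explicit lower bound on $M_\mathrm{a}$. The inequality $\overline{p}(M_\mathrm{a})\leq p_\mathrm{max}$ is equivalent to $M_\mathrm{a}(M_\mathrm{a}-K)\geq \frac{1}{p_\mathrm{max}}\tr{\mat{D}_{\vect{\beta}}^{-1}\mat{D}_{\vect{\gamma}}\sigma_\nu^2}$; since $x\mapsto x(x-K)$ is strictly increasing for $x\geq K+1$, one solves the quadratic $x^2-Kx-\frac{1}{p_\mathrm{max}}\tr{\mat{D}_{\vect{\beta}}^{-1}\mat{D}_{\vect{\gamma}}\sigma_\nu^2}=0$, whose roots have negative product, and keeps the positive one, obtaining that the constraint is equivalent to $M_\mathrm{a}\geq\frac{1}{2}(K+(K^2+\frac{4}{p_\mathrm{max}}\tr{\mat{D}_{\vect{\beta}}^{-1}\mat{D}_{\vect{\gamma}}\sigma_\nu^2})^{1/2})$. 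Because $M_\mathrm{a}$ is an integer this is the same as $M_\mathrm{a}\geq\hat{M}_\mathrm{a}$ with $\hat{M}_\mathrm{a}$ the ceiling in~(\ref{eq:Mhat}), which already establishes the stated interpretation of $\hat{M}_\mathrm{a}$ as the smallest feasible antenna count. The admissible integer set is thus $\{M_\mathrm{a}\in\mathbb{N} : \max\{K+1,\hat{M}_\mathrm{a}\}\leq M_\mathrm{a}\leq M\}$.

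It then remains to minimize $\overline{p_\mathrm{BS}}$ over that set, and here I would reuse the analysis from the proof of Lemma~\ref{lemma1}: the continuous extension $x\mapsto\overline{p_\mathrm{BS}}(x)$ on $(K,\infty)$ diverges both as $x\to K^+$ and as $x\to\infty$, and it has a single interior stationary point $\tilde{M}_\mathrm{a}$, the root of~(\ref{eq:quartic_eq}), being strictly decreasing before it and strictly increasing after. Hence over the interval $[\max\{K+1,\hat{M}_\mathrm{a}\},M]$ the continuous minimizer is simply the projection of $\tilde{M}_\mathrm{a}$ onto that interval, namely $\min\{\max\{\tilde{M}_\mathrm{a},K+1,\hat{M}_\mathrm{a}\},M\}$. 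Passing to the integers: a unimodal function, restricted to the integer points of an interval, is minimized at one of the two integers flanking the continuous argmin when that argmin is interior, and at the nearer endpoint otherwise --- precisely the rule encoded by $\lfloor\cdot\rceil$; and since $K+1$, $\hat{M}_\mathrm{a}$ and $M$ are already integers, the rounding acts nontrivially only when $\tilde{M}_\mathrm{a}$ is the active bound (in which case both $\lfloor\tilde{M}_\mathrm{a}\rfloor$ and $\lceil\tilde{M}_\mathrm{a}\rceil$ still lie in the admissible interval). This yields~(\ref{eq:Mdag}).

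The main obstacle is the passage from the continuous relaxation to the integer optimum: one must argue carefully that projecting the unconstrained argmin onto the feasible interval and then applying $\lfloor\cdot\rceil$ genuinely produces the best integer, which rests on the unimodality of $\overline{p_\mathrm{BS}}$ inherited from Lemma~\ref{lemma1} together with the elementary fact that no interior integer can beat the two neighbours of the continuous minimizer. A secondary point needing attention is bookkeeping: the lower bound is a maximum of three quantities of which only $\tilde{M}_\mathrm{a}$ is in general non-integer, so one must confirm that the ceil--floor step is taken after the bounds are imposed and therefore never ejects the solution from $[\max\{K+1,\hat{M}_\mathrm{a}\},M]$.
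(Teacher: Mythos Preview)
Your proposal is correct and follows essentially the same route as the paper: reduce the per-antenna constraints to the scalar condition $\overline{p}(M_\mathrm{a})\leq p_\mathrm{max}$, solve the resulting quadratic to obtain the lower bound $\hat{M}_\mathrm{a}$, and then project the unconstrained minimizer $\tilde{M}_\mathrm{a}$ from Lemma~\ref{lemma1} onto the enlarged feasible interval $[\max\{K+1,\hat{M}_\mathrm{a}\},M]$ before applying $\lfloor\cdot\rceil$. If anything, your treatment of the continuous-to-integer passage is more explicit than the paper's, which stops at the reformulated problem~(\ref{eq:p_cons,BS_papc}) and leaves the final projection step to the reader via the convexity established in Appendix~\ref{proof3}.
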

\begin{proof}
See Appendix~\ref{proof4}.
\end{proof}
We stress that all the involved quantities depend only on the large-scale fading coefficients, i.e., the second-order statistics of the channel. Therefore, instantaneous \gls{csi} is not required and $M_\mathrm{a}^\dag$ does not need to be computed every channel coherence time. As in Lemma~\ref{lemma1}, the ceil-floor operator is evaluated at last to avoid unnecessary operations. Algorithm~\ref{alg:Madag} illustrates the different steps that are performed to assign the value of $M_\mathrm{a}^\dag$.

\begin{algorithm}[t!]\linespread{1}
	\caption{Find optimal number of active antennas in asymptotic wideband system}\label{alg:Madag}
	\small 
	\begin{algorithmic}
		\Require $\gamma_0,\dotsc,\gamma_{K-1},\beta_0,\dotsc,\beta_{K-1}$ \Comment{Set large-scale fading-based parameters}
			\State $\hat{M}_\mathrm{a} \gets (\ref{eq:Mhat})$ \Comment{Min. number of \gls{bs} antennas required by per-antenna power constraints}
			\State $\tilde{M}_\mathrm{a} \gets \mathrm{Sol.\ of\ } (\ref{eq:quartic_eq})$ \Comment{Number of \gls{bs} antennas in $[K,\infty[$ minimizing asymptotic \gls{bs} consumption}
			\State $y\gets \max\left\{\hat{M}_\mathrm{a},\tilde{M}_\mathrm{a}\right\}$
			\If{$y \leq K+1$}
				\State $M_\mathrm{a}^\dag \gets K+1$ \Comment{Use as less antennas as possible}
			\ElsIf{$y \geq M$}
				\State $M_\mathrm{a}^\dag \gets M$ \Comment{Use as many antennas as possible}
			\Else
				\State $M_\mathrm{a}^\dag \gets \left\lfloor y \right\rceil$ \Comment{Compute ceil-floor operation on intermediate number of antennas}
			\EndIf
	\end{algorithmic}
\end{algorithm}

\section{Performance Evaluation}
\label{sec:sim_results}

In the numerical experiments, the large-scale fading coefficient (in dB) for the user $k$ is computed as~\cite{Bjornson15}
\begin{equation}
    \beta_k^{\mathrm{dB}} = -35.3-37.6\log_{10}u_k
\end{equation}
where $u_k$ is the distance in meters between the user $k$ and the \gls{bs}. We consider the users to be uniformly distributed within a circular cell delimited by $u_\mathrm{min}$ and $u_\mathrm{max}$. The cumulative density function of $u_k$ is given by the ratio between the area of the annulus bounded by $u_k$ and the area of the largest possible annulus, i.e., $F_{u_k}(\upsilon)=\frac{\upsilon^2-u_\mathrm{min}^2}{u_\mathrm{max}^2-u_\mathrm{min}^2}$. Therefore, the probability density function of $u_k$ is $f_{u_k}(\upsilon) = \frac{dF_{u_k}(\upsilon)}{d\upsilon} = \frac{2\upsilon}{u_\mathrm{max}^2-u_\mathrm{min}^2}$. The target \gls{sinr} (in dB) of the user $k$ is then computed as
\begin{equation}
\label{eq:snr_per-user}
    \gamma_k^{\mathrm{dB}} = 5\log_{10}\left(\frac{\beta_k}{4.86\times 10^{-14}}\right).
\end{equation}
Using (\ref{eq:snr_per-user}), $\gamma_k^{\mathrm{dB}} \in [4, 20]$ dB.
The channel model is given in $\mathbf{(As3)}$. The remaining parameters are listed in Table~\ref{tab:sim_params}. As performance metrics, we consider the gain in $p_\mathrm{PAs}$ and the gain in $p_\mathrm{BS}$. The gain in $p_\mathrm{PAs}$ is the ratio between the \glspl{pa} consumed power by the transmit power-based precoder and by the \glspl{pa} consumed power-based precoder. On the other hand, the gain in $p_\mathrm{BS}$ is the ratio between the \gls{bs} consumed power by the transmit power-based precoder and by the \glspl{pa} consumed power-based precoder. In this way, we are using as a benchmark the conventional \gls{zf} precoder, which optimizes the transmit power and activates all \gls{bs} antennas.
\begin{table}[htpb]
\centering
\caption{Parameters of the numerical experiments.}
\label{tab:sim_params}
\begin{tabular}{c|c}
    \hline
    \textbf{Parameter} & \textbf{Value} \\
    \hline
    \glspl{pa} maximal power: $p_\mathrm{max}$ & $1$ W \\ 
    \glspl{pa} maximal efficiency: $\eta_\mathrm{max}$ & $0.22$ \\
    Noise power: $\sigma_\nu^2$ & $-96$ dBm \\
    Fixed power consumption: $p_{\mathrm{fix}}$ & $15$ W \\ 
    Circuit power consumption scaling per antenna: $\mathcal{C}$ & $0.7$ W \\
    Minimum distance from user to BS: $u_\mathrm{min}$ & $35$ m \\ 
    Maximum distance from user to BS: $u_\mathrm{max}$ & $250$ m \\ 
    \hline
\end{tabular}
\end{table}

\textit{Remark: the simulations of the narrowband and wideband systems analyzed in Sections~\ref{sec:SC_system} and~\ref{sec:MC_system} assume no maximal per-antenna power constraints, as the precoders are obtained under $p_\mathrm{max}\to\infty$. However, $p_\mathrm{max}$ has to be used to quantify the consumed powers. Therefore, in the simulations, the results exceeding the $p_\mathrm{max}$ values are discarded.}

When assessing the performance of asymptotic wideband systems, we run simulations for $M=64$ varying the value of $K$ from $1$ (low traffic) to $40$ (high traffic). Using the parameters in Table~\ref{tab:sim_params} and considering the conventional case $M_\mathrm{a}=M$, the average contributions of the terms $p_\mathrm{PAs}$, $\mathcal{C}M_\mathrm{a}$, and $p_\mathrm{fix}$ are $7\%$, $70\%$, and $23\%$ for $K=1$, then $31\%$, $52\%$, and $17\%$ for $K=20$, while they are $46\%$, $40\%$, and $14\%$ for $K=40$.

\subsection{Precoding Solutions for a Channel Realization}

\begin{figure*}[!ht]
\centering
\subfloat[]{
\resizebox{!}{5.5cm}{
\includegraphics{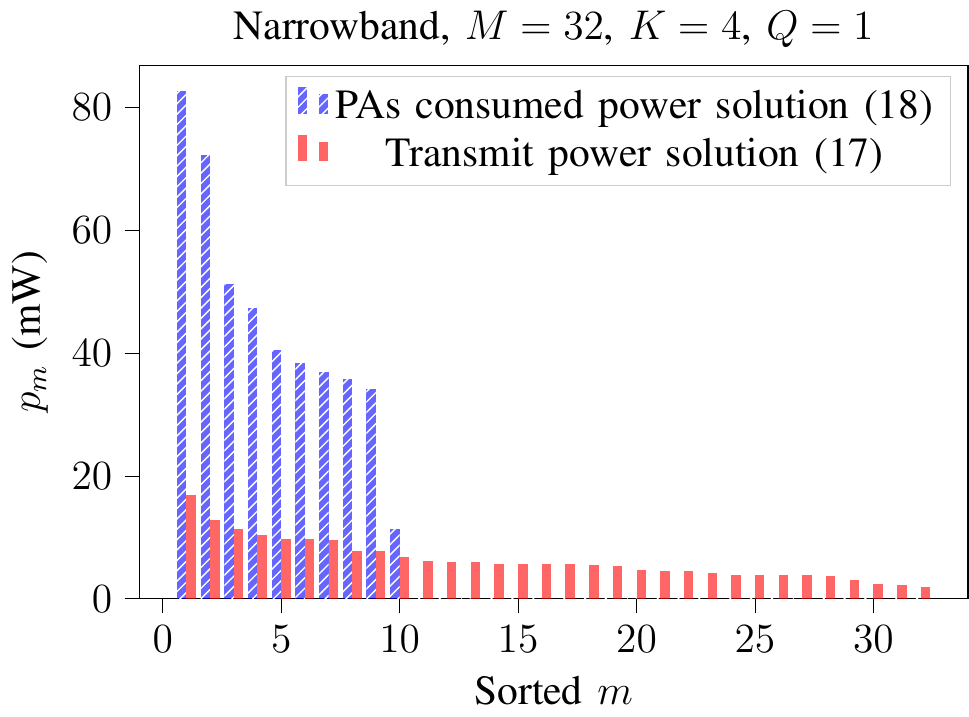}
}
\label{fig:pm-vs-m_SC}}
\hfil
\subfloat[]{
\resizebox{!}{5.5cm}{
\includegraphics{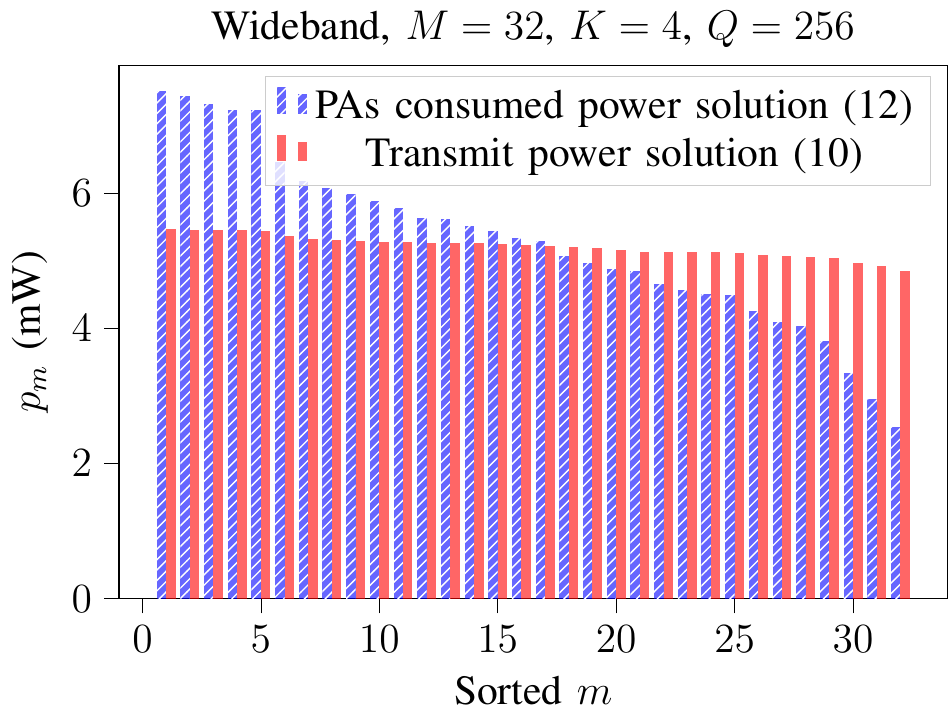}
}
\label{fig:pm-vs-m_MC}}
\caption{Transmit power as a function of the antenna index with decreasing allocated power, for $M=32$ antennas and $K=4$ users in \gls{iid} Rayleigh fading. (a) Narrowband case, $Q=1$ subcarrier. The gain in \glspl{pa} consumption is $1.19$, while the gain in \gls{bs} consumption is $1.51$. (b) Wideband case, $Q=256$ subcarriers. The gains in \glspl{pa} and \gls{bs} consumption are both equal to $1.00$.}
\end{figure*}

Fig.~\ref{fig:pm-vs-m_SC} shows the solutions, in terms of transmit powers at the different antennas, for the narrowband system and a specific channel realization, $K=4$ and $M=32$. As discussed in~\cite{Peschiera22}, the solution based on the \glspl{pa} consumed power activates only few antennas, while the traditional \gls{zf} solution uses all the available antennas. Using (\ref{eq:total_consumed_power}) as a cost function induces sparsity in the power allocation.
Note that, differently from the single-user case where only one antenna is used, the number of active antennas must be greater than $K$ to enable spatial multiplexing. 

When moving to a wideband system, the precoder based on the \glspl{pa} consumed power gradually uses more antennas as $Q$ increases. Fig.~\ref{fig:pm-vs-m_MC} shows the solutions for $Q=256$. The solution based on the \glspl{pa} consumed power, although presenting more variability in the powers per antenna than the traditional solution, still activates all the \gls{bs} antennas. Achieving the \gls{sinr} constraints for all the subcarriers, while minimizing the consumed power by the \glspl{pa}, requires using all the available antennas. 
In short, Fig.~\ref{fig:pm-vs-m_MC} shows that the per-subcarrier \gls{zf} (\ref{eq:ZF_ptx_sol}) is efficient in terms of \glspl{pa} consumed power.

\subsection{Average Gains in Power Consumption}

\begin{figure*}[!ht]
\centering
\subfloat[]{
\resizebox{!}{6cm}{
\includegraphics{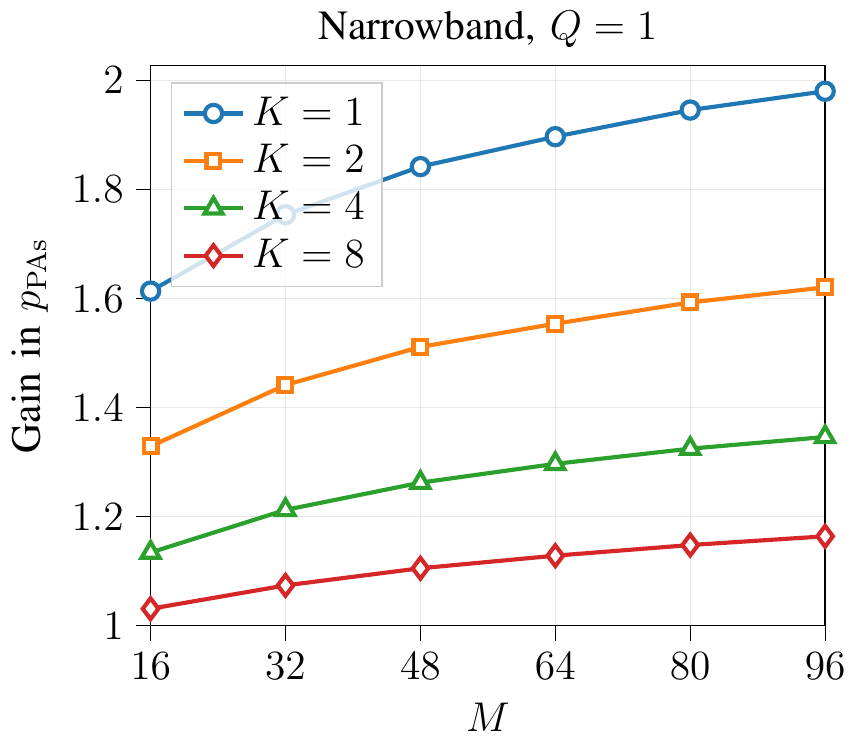}
}
\label{fig:gain_pPAs_SC}}
\hfil
\subfloat[]{
\resizebox{!}{6cm}{
\includegraphics{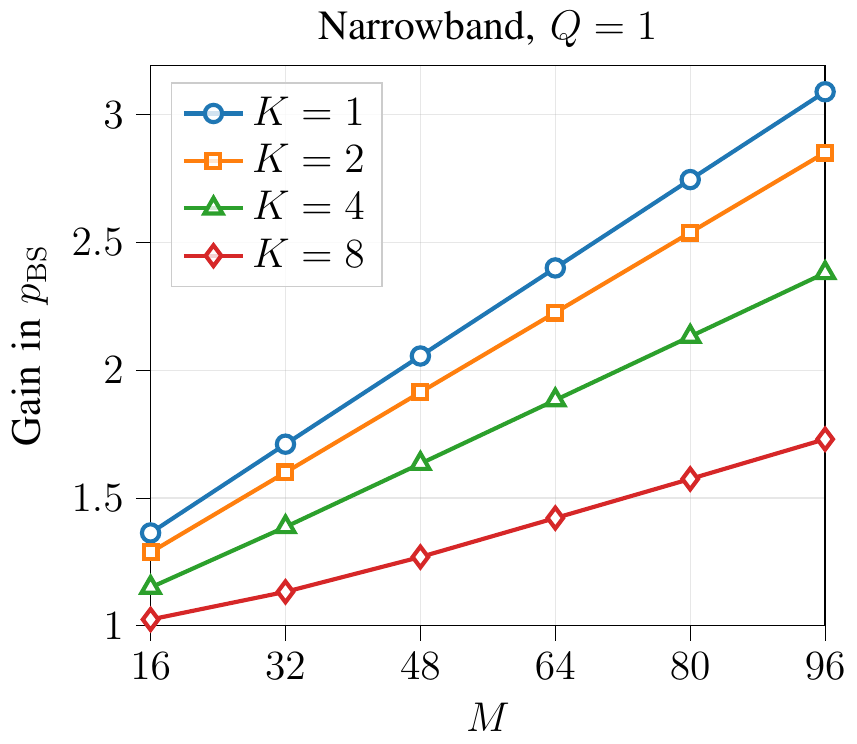}
}
\label{fig:gain_pBS_SC}}
\hfil
\subfloat[]{
\resizebox{!}{6cm}{
\includegraphics{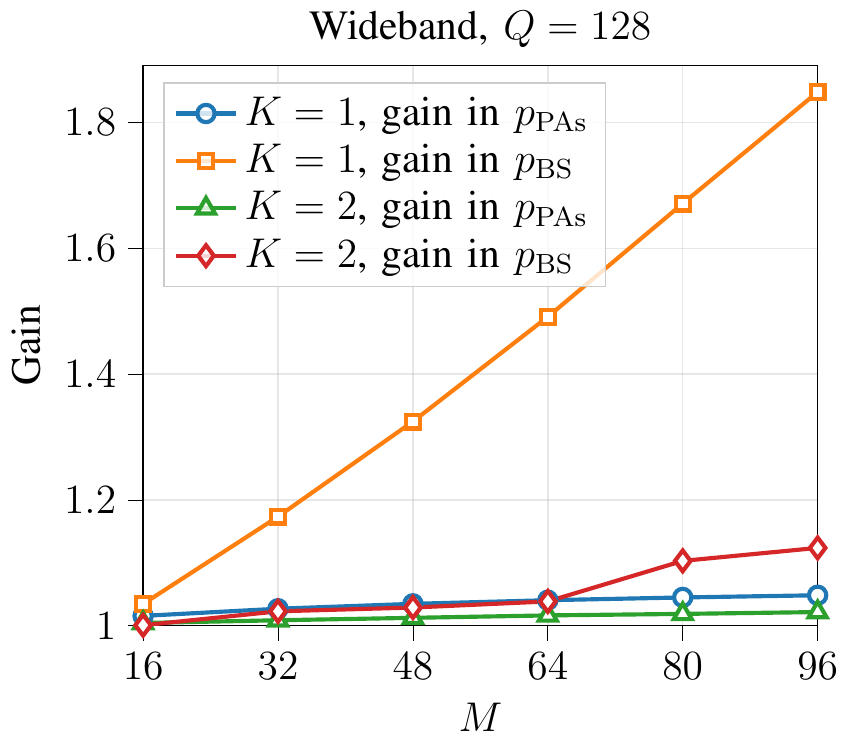}
}
\label{fig:gain_MC}}
\hfil
\subfloat[]{
\resizebox{!}{6cm}{
\includegraphics{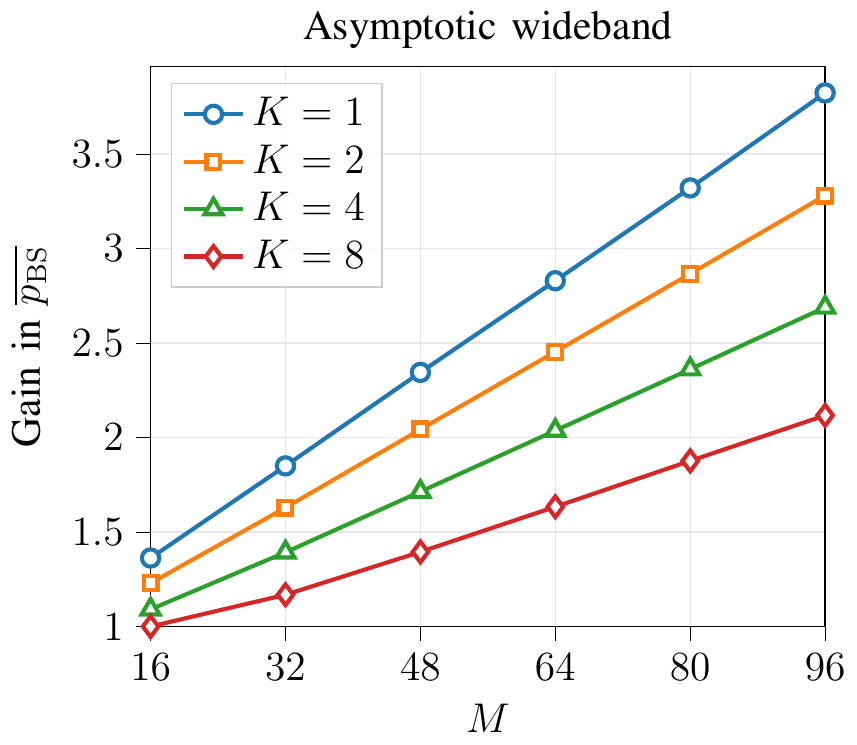}
}
\label{fig:gain_asMC}}
\caption{Gain in power consumption versus number of antennas for narrowband and wideband systems in \gls{iid} Rayleigh fading ($2\cdot10^3$ realizations), shown for different values of $K$. (a) Narrowband, gain in \glspl{pa} consumption. (b) Narrowband, gain in \gls{bs} consumption. (c) Wideband with $Q=128$, gain in \glspl{pa} and \gls{bs} consumption. (d) Asymptotic wideband, gain in asymptotic \gls{bs} consumption. In this case, the gain is the \gls{bs} consumption by activating $M$ antennas divided by the \gls{bs} consumption by activating $M_\mathrm{a}^\dag$ antennas.}
\label{fig:gains}
\end{figure*}

Following the previous observations, Fig.~\ref{fig:gain_pPAs_SC}--\ref{fig:gain_asMC} show the relative differences in the consumed power for different systems. In the narrowband system and considering the \glspl{pa} consumed power to evaluate the performance (Fig.~\ref{fig:gain_pPAs_SC}), the gain of the novel precoder (\ref{eq:ZF_pcons_sol_SC}) over the traditional one (\ref{eq:ZF_ptx_sol_SC}) is always greater than $1.5$ for $K=1$. When more users are present, the difference significantly decreases. For $K=8$, the ratio is below $1.2$ for every value of $M$. Instead, when the \gls{bs} consumed power is evaluated, the achievable gains remain large even when more users are communicating (Fig.~\ref{fig:gain_pBS_SC}). With $M=64$, the gain in the \gls{bs} power consumption ranges from $1.4$ to $2.4$, depending on $K$.

In a wideband system with $Q=128$ subcarriers, the differences between the conventional and the novel precoder remain significant for $K=1$ and considering the \gls{bs} consumption (Fig.~\ref{fig:gain_MC}). However, the presence of $K=2$ users already requires the activation of all the antennas, and both the \glspl{pa} and \gls{bs} power consumption gains tend to $1$.

Considering the asymptotic wideband case, one can compute the gains achieved by using $M_\mathrm{a}^\dag$ antennas over $M$. Fig.~\ref{fig:gain_asMC} shows that significant savings in \gls{bs} consumption can be obtained, up to a factor of $3.8\times$. The benefits are similar to the ones achieved in a narrowband system when evaluating the \gls{bs} consumption. For instance, when $K=4$ and $M=48$, the conventional precoder consumes $1.7$ times more power than the optimized precoder.

\subsection{Asymptotic Wideband Regime}

\begin{figure*}[!b]
\centering
\subfloat[]{
\resizebox{!}{6cm}{
\includegraphics{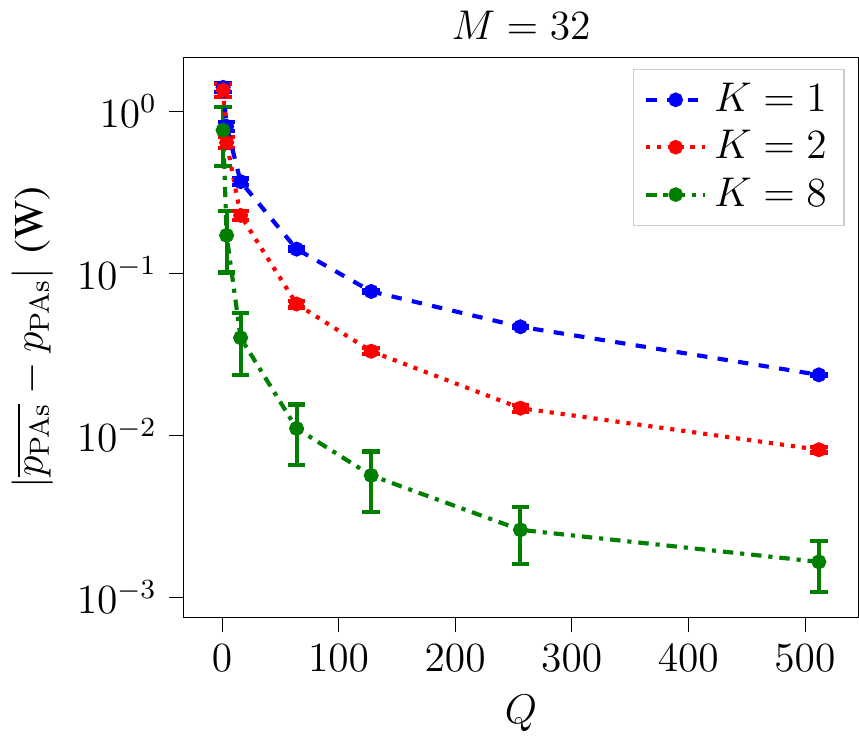}
}
\label{fig:as_error}}
\hfil
\subfloat[]{
\resizebox{!}{6cm}{
\includegraphics{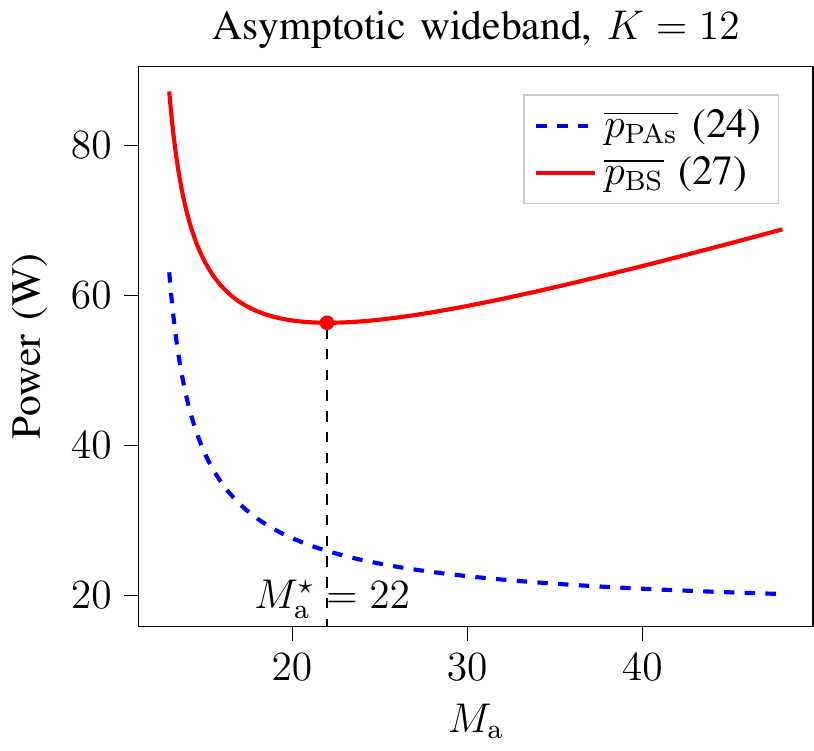}
}
\label{fig:as_example}}
\caption{(a) Magnitude of the difference between the asymptotic and non-asymptotic \glspl{pa} consumptions versus number of subcarriers for $M=32$ and different values of $K$. The non-asymptotic consumptions are computed through simulations, and the confidence intervals of the results are depicted, where the width is set to two times the variance of the results. The differences become smaller as $Q$ increases. (b) Asymptotic \glspl{pa} and \gls{bs} power consumption versus number of active antennas for $M=48$ and $K=12$. The number of antennas $M_\mathrm{a}^\star$ is also shown.}
\end{figure*}

Fig.~\ref{fig:as_error} illustrates the validity of the asymptotic results in Section~\ref{sec:asymptotic_MC} for $M=32$ and different values of $K$. The simulated \glspl{pa} consumed powers converge to the asymptotic values, and $Q=128$ is already sufficient to allow the asymptotic result to be a tight approximation (i.e., the average errors are below $10^{-1}$). The trend of the asymptotic \glspl{pa} and \gls{bs} consumed powers as a function of $M_\mathrm{a}$ is shown in Fig.~\ref{fig:as_example}, for $K=12$. The \glspl{pa} consumed power decreases monotonically with $M_\mathrm{a}$, while the \gls{bs} consumed power first decreases for small $M_\mathrm{a}$ (it is beneficial to use more antennas to reduce the \glspl{pa} consumed power) and then increases for large $M_\mathrm{a}$ (it is detrimental to use more antennas due to the larger circuit power consumption). The convexity of (\ref{eq:p_cons,BS_asympt}) implies a global minimum. 

The achievable power consumption savings by activating $M_\mathrm{a}^\dag$ antennas can be also characterized as a function of the number of users. Fig.~\ref{fig:asympt_K} illustrates the comparison for $M=64$. We consider the number of users to be an indication of the system traffic and load. Indeed, the per-antenna power constraints are not binding when $K$ is small, and they gradually start to be active when $K$ increases. The conventional precoder always utilizes all the $M$ antennas, while the optimized precoder adapts the number of utilized antennas to the traffic. The value of $M_\mathrm{a}^\dag$ grows proportionally to $K$, as shown in Fig.~\ref{fig:asK2}.\footnote{The same trend is observed in~\cite{Senel19}, with the difference that in~\cite{Senel19} the authors considered $p_\mathrm{BS}$ to be the sum of the transmit power and the circuit power consumption.} When the system becomes more loaded, $M_\mathrm{a}^\dag$ increases due to the larger value of $p_\mathrm{PAs}$ and due to the transmit power constraint (i.e., more antennas need to be saturated to achieve the \gls{qos} constraints). The \gls{bs} consumptions by activating $M$ and $M_\mathrm{a}^\dag$ antennas eventually converge (Fig.~\ref{fig:asK1}). We also make a comparison with a precoder that always uses the minimum number of antennas ($K+1$). This precoder minimizes the \gls{bs} consumption when $K=1$, and then starts to diverge from the optimal precoder because it does not counterbalance the increase in $p_\mathrm{PAs}$ with the utilization of more antennas. 

In terms of achievable savings (i.e., the ratio between the asymptotic \gls{bs} consumptions of the benchmark and optimized precoders), Fig.~\ref{fig:asK2} shows that the optimized precoder reduces up to a factor of $2.8\times$ the consumption in low load with respect to the conventional precoder. When $K=10$, we can still achieve a gain equal to $1.5$. For more users, the gain reduces and eventually vanishes (all antennas are activated in both cases). The savings over the precoder that activates $K+1$ antennas follow the opposite trend, increasing in high load up to a factor of $2.2\times$. 

\begin{figure*}[!t]
\centering
\subfloat[]{
\resizebox{!}{6.2cm}{
\includegraphics{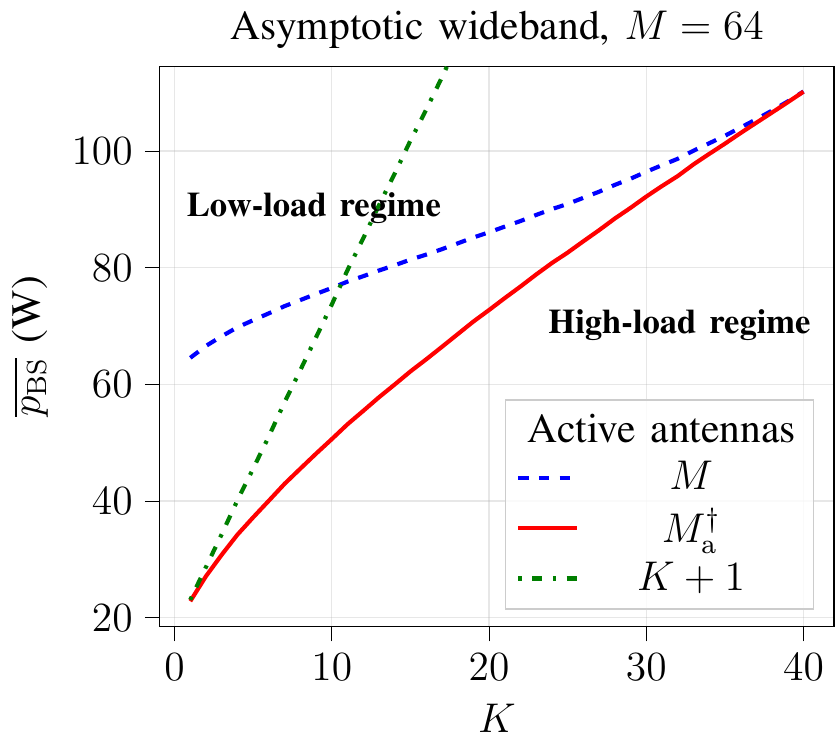}
}
\label{fig:asK1}
}
\hfil
\subfloat[]{
\resizebox{!}{6.2cm}{
\includegraphics{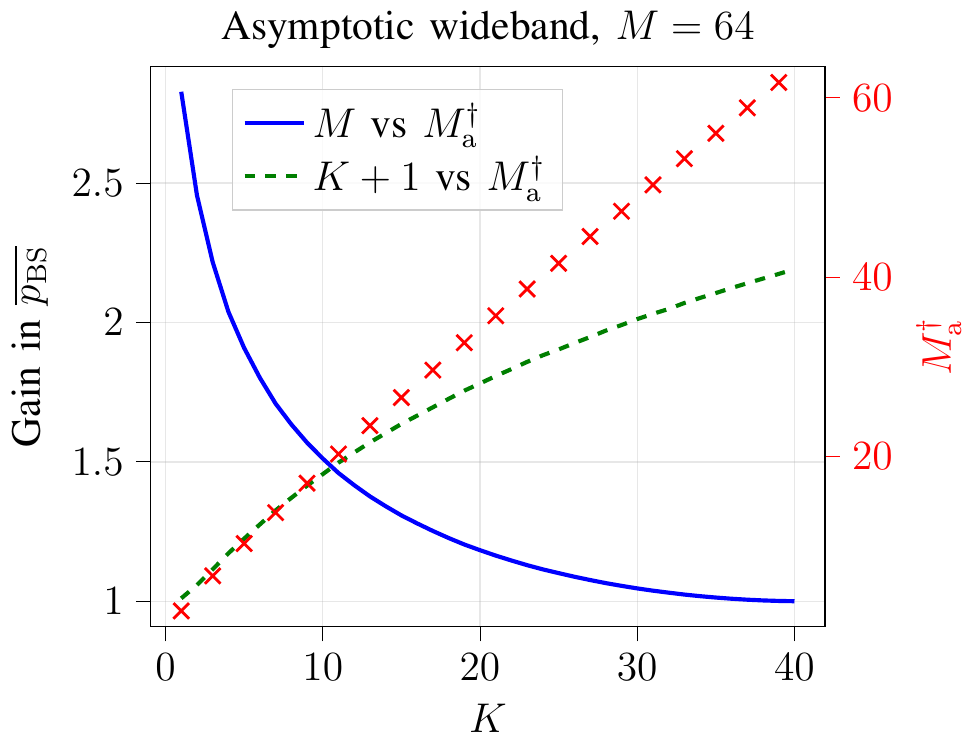}
}
\label{fig:asK2}
}
\caption{Characterization of the asymptotic wideband system as a function of $K$ for $M=64$, averaged over $2\cdot 10^3$ channel realizations. (a) Asymptotic \gls{bs} power consumption (\ref{eq:p_cons,BS_asympt}) by using $M$, $K+1$, and $M_\mathrm{a}^\dag$ antennas. (b) Referring to the same simulations, ratio between $\overline{p_\mathrm{BS}}$ by activating $M$ antennas and $\overline{p_\mathrm{BS}}$ by activating $M_\mathrm{a}^\dag$ antennas (solid line), ratio between $\overline{p_\mathrm{BS}}$ by activating $K+1$ antennas and $\overline{p_\mathrm{BS}}$ by activating $M_\mathrm{a}^\dag$ antennas (dashed line), and optimized number of active antennas $M_\mathrm{a}^\dag$ (crosses).}
\label{fig:asympt_K}
\end{figure*}

\subsection{Complexity Analysis}

To quantify the computational complexity of the proposed precoders, we calculate the number of complex floating point operations (flops). As a reference for the number of flops required by standard linear algebraic operations, we consider~\cite[App. C.1.1]{Boyd04}. 

\subsubsection{Wideband System} In the wideband system, we need to calculate~(\ref{eq:ZF_pcons_sol}). The computation of both $\mat{H}_q\mat{D}_{\vect{p}}^{1/2}$ and $\mat{D}_{\vect{p}}^{1/2}\hr{\mat{H}_q}$ require $KM$ flops, corresponding to a scalar multiplication for each element of $\mat{H}_q$. Computing $\mat{H}_q\mat{D}_{\vect{p}}^{1/2}\hr{\mat{H}_q}$, which is a symmetric matrix, requires $K^2M$ flops. Using Cholesky factorization and forward and backward substitution, we can compute the $m$-th row of the matrix $\mat{D}_{\vect{p}}^{1/2}\hr{\mat{H}_q}\inv{\mat{H}_q\mat{D}_{\vect{p}}^{1/2}\hr{\mat{H}_q}}$ with $\frac{1}{3}K^3+2K^2$ flops. The forward and backward substitutions need to be performed $M$ times, therefore the number of flops becomes $\frac{1}{3}K^3+2K^2M$. At the end, the scaling by $\tilde{\mat{D}}_{\vect{\gamma}}^{1/2}\sigma_\nu$ requires $K$ flops. All that needs to be repeated for every subcarrier. By adding the terms, we obtain $\frac{1}{3}K^3Q+3K^2MQ+2KMQ+KQ$ flops, which scales as $\mathcal{O}\left(K^3Q+K^2MQ\right)$. Before computing~(\ref{eq:ZF_pcons_sol}), the powers at the antennas need to be retrieved. The computation of~(\ref{eq:MC-MU_fpe}) requires, other than the necessary flops to compute $\mat{W}_q$, $KQ$ multiplications to calculate the magnitudes and $K+Q-2$ additions to calculate the sums. Multiplying by $M$, we obtain $KQM+KM+QM-2M$ additional flops. The trend, however, remains $\mathcal{O}\left(K^3Q+K^2MQ\right)$. By using Algorithm~\ref{alg:fpim}, the computation of powers is repeated until convergence. If the algorithm converges in $I$ iterations, the total number of flops is $\mathcal{O}\left(K^3QI+K^2MQI\right)$. The conventional \gls{zf}, instead, requires $\mathcal{O}\left(K^3Q+K^2MQ\right)$ flops.

\begin{figure}[!ht]
\centering
\resizebox{!}{6.5cm}{
\includegraphics{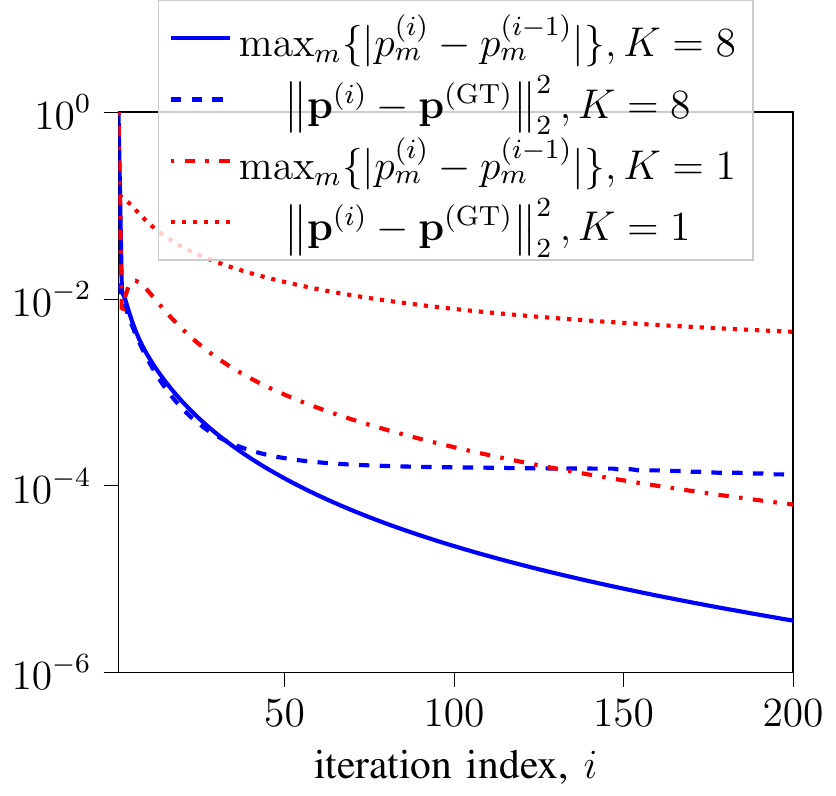}
}
\caption{Performance of Algorithm~\ref{alg:fpim} when $Q=1$ and $M=32$, using a tolerance $\varepsilon=10^{-4}$. The ground truth $\vect{p}^\mathrm{(GT)}$ is computed with a convex solver. Results are averaged over $10^3$ executions of the algorithm.}
\label{fig:fpim}
\end{figure}

\begin{table}[t!]
\centering
\caption{Precoders complexity in the three analyzed systems.}
\label{tab:compl}
\begin{tabular}{c|c|c}
    \hline
    \textbf{System} & \textbf{Proposed sol.} & \textbf{Conventional sol.} \\
    \hline
    Wideband & $\mathcal{O}\left(K^3QI+K^2MQI\right)$ & $\mathcal{O}\left(K^3Q+K^2MQ\right)$ \\
    Narrowband & $\mathcal{O}\left(K^3I+K^2MI\right)$ & $\mathcal{O}\left(K^3+K^2M\right)$ \\
    As.\ wideband & $\mathcal{O}\left(K^3Q+K^2M_\mathrm{a}Q\right)$ & $\mathcal{O}\left(K^3Q+K^2MQ\right)$ \\
    \hline
\end{tabular}
\end{table}

Fig.~\ref{fig:fpim} illustrates the convergence of the iterative algorithm. We show two narrowband cases because, when $Q$ increases, the convergence is faster (i.e., in tens of iterations), likely due to the more uniform power allocation that might be associated with fewer changes from one iteration to the other. We display the maximum absolute variation among the powers between iterations and the squared norm of the difference between the current solution and the ground truth, which is obtained through a convex solver. Given that the tolerance $\varepsilon$ is set to $10^{-4}$, the algorithm converges in $I\approx200$ iterations for $K=1$, and in $I\approx50$ iterations for $K=8$. Moreover, it can be seen as, considering this tolerance, the squared norm stays below $10^{-2}$. 

\subsubsection{Narrowband System} The same reasoning used for the wideband system applies to the narrowband one, which is then associated with a complexity of $\mathcal{O}\left(K^3I+K^2MI\right)$ flops for the \glspl{pa} consumed power solution, while $\mathcal{O}\left(K^3+K^2M\right)$ for the transmit power solution.

\subsubsection{Asymptotic Wideband System} In the asymptotic wideband scenario, the precoding matrices have $M_\mathrm{a}$ rows, then are computed with complexity $\mathcal{O}\left(K^3Q+K^2M_\mathrm{a}Q\right)$. Before executing Algorithm~\ref{alg:Madag}, which involves only comparisons, we need to solve the quartic equation~(\ref{eq:quartic_eq}). The solution of quartics can be computed with a fixed number of operations. Calculating the last coefficient in~(\ref{eq:quartic_eq}) requires the computation of $\tr{\mat{D}_{\vect{\beta}}^{-1}\mat{D}_{\vect{\gamma}}}$, which has complexity $\mathcal{O}\left(K\right)$. Therefore, we can consider that retrieving $M_\mathrm{a}^\dag$, which needs to be recomputed when the large-scale fading coefficients change, has lower complexity than retrieving the precoding matrices, which instead need to be retrieved every time the small-scale fading coefficients change. We can therefore consider the complexity to be $\mathcal{O}\left(K^3Q+K^2M_\mathrm{a}Q\right)$. In the conventional case, the complexity remains $\mathcal{O}\left(K^3Q+K^2MQ\right)$. 

Table~\ref{tab:compl} summarizes the complexities of the precoders for the three investigated systems. Given the behavior of Algorithm~\ref{alg:fpim}, the proposed wideband and narrowband precoders have complexities of $1$ or $2$ orders of magnitudes larger than the conventional precoders. However, the larger complexity is justified by the large savings shown in Fig.~\ref{fig:gains}, especially for narrowband systems. Instead, the asymptotic wideband precoder has better performance and lower complexity in low-to-medium load.

\section{Conclusion}
\label{sec:conclusion}

In this paper, we have studied \gls{mmimo} precoders optimizing consumed power in both narrowband and wideband systems. In narrowband systems not subject to per-antenna power constraints, the optimal precoder activates few \gls{bs} antennas, with consequent savings up to a factor of $3\times$ in the \gls{bs} consumption by deactivating the \gls{rf} chains of the non-active antennas. In wideband systems, more \gls{bs} antennas are progressively activated as the number of subcarriers increases, hence no \gls{rf} chains can be switched off. The asymptotic analysis of wideband systems subject to per-antenna power constraints reveals, however, how a simple optimization on the number of active antennas can lead to \gls{bs} consumption savings up to a factor of $2.8\times$ in low-traffic scenarios. Complexity analysis suggests the viability of the proposed solutions, with the number of active antennas that is only computed when the large-scale fading changes.

A fundamental extension of this work is the analysis in the context of distributed \gls{mmimo}. The power consumption model and \gls{qos} constraints might be redefined, depending on the selected implementation. Relevant aspects to be included in the characterization of cellular systems are the impact of non-ideal channel estimation and the effect of \gls{pa} non-linear distortions. The comparison with hybrid precoding architectures constitutes also an interesting future direction, as well as the consideration of correlation between \gls{bs} antennas in the channel modeling.

\appendices

\section{Proof of Theorem~\ref{th:ZF_pcons_sol}} 
\label{proof1}
The Lagrangian function of problem~(\ref{eq:ZF_pcons}) under $\mathbf{(As1)}$, where the Lagrange multipliers are $\left\{\lambda_{k',k,q} \inC{}\right\}$ and the single element of $\tilde{\mat{D}}_{\vect{\gamma}}^{1/2}\sigma_\nu$ is $d_{k',k}$, is
\begin{equation}
\begin{aligned}
	& \mathcal{L}\left(w_{0,0,0},\dotsc,w_{M-1,K-1,Q-1},\lambda_{0,0,0},\dotsc,\lambda_{K-1,K-1,Q-1}\right) \\
	&= \underbrace{\sum_{m=0}^{M-1}\left(\sum_{k=0}^{K-1}\sum_{q=0}^{Q-1}|w_{m,k,q}|^2\right)^{1/2}}_{(a)}
	+ \underbrace{\sum_{q=0}^{Q-1}\Re\left(\sum_{k'=0}^{K-1}\sum_{k=0}^{K-1}\lambda_{k',k,q}\right.
	\left.\left(\sum_{m=0}^{M-1}h_{k',m,q}w_{m,k,q}-d_{k',k}\right)\right)}_{(b)}.
\end{aligned}
\end{equation}
By using the property that, for $z\inC{}$, $\Re\{{z}\}=\frac{1}{2}\left(z+z^*\right)$, we expand the constraint as
\begin{equation}
\begin{aligned}
(b) &= \frac{1}{2}\sum_{q=0}^{Q-1}\sum_{k'=0}^{K-1}\sum_{k=0}^{K-1}\lambda_{k',k,q}\left(\sum_{m=0}^{M-1}h_{k',m,q}w_{m,k,q}-d_{k',k}\right) \\
	&+ \frac{1}{2}\sum_{q=0}^{Q-1}\sum_{k'=0}^{K-1}\sum_{k=0}^{K-1}\lambda_{k',k,q}^*\left(\sum_{m=0}^{M-1}h_{k',m,q}^*w_{m,k,q}^*-d_{k',k}\right).
\end{aligned}
\end{equation}
Let us now compute the Wirtinger derivative of the Lagrangian function using the properties that, for $z\inC{}$, $\frac{\partial z^*}{\partial z^*}=1$, $\frac{\partial \left(zz^*\right)}{\partial z^*}=z$, and $\frac{\partial z}{\partial z^*}=0$~\cite{Hjorungnes11}. To differentiate $(a)$, we note that, for $z_1,\dotsc,z_N\inC{}$, $\frac{\partial}{\partial z_n^*}\left(\left(\sum_{n=1}^Nz_nz_n^*\right)^{1/2}\right) = \frac{1}{2}z_n\left(\sum_{n'=1}^Nz_{n'}z_{n'}^*\right)^{-1/2}$. The differentiation of $(b)$ follows from the basic properties listed above.
The Wirtinger derivative of $\mathcal{L}$ with respect to a precoding coefficient is then given by
\begin{equation}
\label{eq:app_der}
	\frac{\partial \mathcal{L}}{\partial w_{m,k,q}^*} = \frac{1}{2}\frac{w_{m,k,q}}{\left(\sum_{k'=0}^{K-1}\sum_{q'=0}^{Q-1}|w_{m,k',q'}|^2\right)^{1/2}}
	+ \frac{1}{2}\sum_{k'=0}^{K-1}\lambda_{k',k,q}^*h_{k',m,q}^*.
\end{equation}
Recalling that $p_m = \sum_{k=0}^{K-1}\sum_{q=0}^{Q-1}|w_{m,k,q}|^2$ and setting~(\ref{eq:app_der}) to zero, we obtain
\begin{equation}
\label{eq:app_single}
\begin{aligned}
	\frac{\partial \mathcal{L}}{\partial w_{m,k,q}^*} = 0 \iff w_{m,k,q} = -p_m^{1/2}\sum_{k'=0}^{K-1}\lambda_{k',k,q}^*h_{k',m,q}^*.
\end{aligned}
\end{equation}
The constraint gives
\begin{equation}
\begin{aligned}
	\sum_{m=0}^{M-1}h_{k',m,q}^*w_{m,k,q}^*-d_{k',k} = 0
\end{aligned}
\end{equation}
which, using~(\ref{eq:app_single}), can be rewritten as
\begin{equation}
\begin{aligned}
    \sum_{m=0}^{M-1}h_{k',m,q}^*p_m^{1/2}\sum_{k''=0}^{K-1}\lambda_{k'',k,q}h_{k'',m,q} = -d_{k',k}.
\end{aligned}
\end{equation}
We are left with the following system of equations
\begin{equation}
\label{eq:app_sys}
\begin{cases}
    w_{m,k,q} = -p_m^{1/2}\displaystyle\sum_{k'=0}^{K-1}\lambda_{k',k,q}^*h_{k',m,q}^* \\
	\displaystyle\sum_{m=0}^{M-1}h_{k',m,q}^*p_m^{1/2}\displaystyle\sum_{k''=0}^{K-1}\lambda_{k'',k,q}h_{k'',m,q}=-d_{k',k}
\end{cases}.
\end{equation}
Making use of $\mat{D}_{\vect{p}} = \mathrm{diag}\left(p_0,\dotsc,p_{M-1}\right)$ and denoting with $\mat{\Lambda}_q\inC{K\times K}$ the matrix containing the Lagrange multipliers for the subcarrier $q$, (\ref{eq:app_sys}) can be written in matrix form as
\begin{equation}
\label{eq:app_sys_mat}
\begin{cases}
	\mat{W}_q = -\mat{D}_{\vect{p}}^{1/2}\hr{\mat{H}_q}\mat{\Lambda}_q^*\\
	\mat{H}_q^*{\mat{D}_{\vect{p}}^{1/2}}\tp{\mat{H}_q}\mat{\Lambda}_q = -\tilde{\mat{D}}_{\vect{\gamma}}^{1/2}\sigma_\nu
\end{cases}.
\end{equation}
The matrix $\mat{\Lambda}_q$ is then given by
\begin{equation}
\begin{aligned}
    \mat{\Lambda}_q = -\left(\mat{H}_q^*{\mat{D}_{\vect{p}}^{1/2}}\tp{\mat{H}_q}\right)^{-1}\tilde{\mat{D}}_{\vect{\gamma}}^{1/2}\sigma_\nu
\end{aligned}
\end{equation}
therefore, by substituting it back to the first expression of~(\ref{eq:app_sys_mat}), we obtain
\begin{equation}
\begin{aligned}
\mat{W}_q = \mat{D}_{\vect{p}}^{1/2}\hr{\mat{H}_q}\left(\mat{H}_q\mat{D}_{\vect{p}}^{1/2}\hr{\mat{H}}_q\right)^{-1}\tilde{\mat{D}}_{\vect{\gamma}}^{1/2}\sigma_\nu.
\end{aligned}
\end{equation}

\section{Proof of Theorem~\ref{th:p_PAs_asympt}}
\label{proof2}
Problem~(\ref{eq:ZF_pcons}) is convex. In the asymptotic wideband regime, the power allocations in the neighborhood of the global optimum achieve substantially the same performance. Fig.~\ref{fig:pm-vs-m_MC} illustrates as (i) different power allocations consume approximately the same amount of power while achieving the users' \gls{qos}, and (ii) the power tends to be uniformly allocated among the antennas as $Q$ increases. Among the possible allocations, let us consider a uniform allocation, i.e., $\mat{D}_{\vect{p}}=p\mat{I}_M$. When validating the asymptotic results, we will prove that this allocation is optimal. Equation~(\ref{eq:MC-MU_fpe}) becomes
\begin{equation}
p = \sum_{q=0}^{Q-1}\sum_{k=0}^{K-1}\left|\left[\hr{\mat{H}_q}\left(\mat{H}_q\hr{\mat{H}_q}\right)^{-1}\tilde{\mat{D}}_{\vect{\gamma}}^{1/2}\sigma_\nu\right]_{m,k}\right|^2.
\end{equation}
By summing over the antennas we can write
\begin{equation}
\begin{split}
Mp &= \sum_{q=0}^{Q-1}\sum_{m=0}^{M-1}\sum_{k=0}^{K-1}\left|\left[\hr{\mat{H}_q}\left(\mat{H}_q\hr{\mat{H}_q}\right)^{-1}\tilde{\mat{D}}_{\vect{\gamma}}^{1/2}\sigma_\nu\right]_{m,k}\right|^2 \\
&= \sum_{q=0}^{Q-1}\tr{\hr{\mat{H}_q}\left(\mat{H}_q\hr{\mat{H}_q}\right)^{-1}\tilde{\mat{D}}_{\vect{\gamma}}\sigma_\nu^2\left(\mat{H}_q\hr{\mat{H}_q}\right)^{-1}\mat{H}_q}.
\end{split}
\end{equation}
By using the cyclic property of the trace
\begin{equation}
\label{eq:inproof2}
p = \frac{1}{M}\sum_{q=0}^{Q-1}\tr{\left(\mat{H}_q\hr{\mat{H}_q}\right)^{-1}\tilde{\mat{D}}_{\vect{\gamma}}\sigma_\nu^2}.
\end{equation}
Under $\mathbf{(As2)}$ and recalling that $\tilde{\mat{D}}_{\vect{\gamma}} = \frac{1}{Q}\mat{D}_{\vect{\gamma}}$, the law of large numbers gives
\begin{equation}
\frac{1}{Q}\sum_{q=0}^{Q-1}\left(\mat{H}_q\hr{\mat{H}_q}\right)^{-1}\mat{D}_{\vect{\gamma}}\sigma_\nu^2 \to \expt{\left(\mat{H}_q\hr{\mat{H}_q}\right)^{-1}\mat{D}_{\vect{\gamma}}\sigma_\nu^2}
\end{equation}
then~(\ref{eq:inproof2}) becomes
\begin{equation}
p \to \frac{1}{M}\tr{\expt{\left(\mat{H}_q\hr{\mat{H}_q}\right)^{-1}\mat{D}_{\vect{\gamma}}\sigma_\nu^2}}.
\end{equation}
Under $\mathbf{(As3)}$, the Gram matrix $\mat{H}_q\hr{\mat{H}_q} \sim \cw{\mat{D}_{\vect{\beta}}}{M}{K}$, then its inverse $\left(\mat{H}_q\hr{\mat{H}_q}\right)^{-1} \sim \icw{\mat{D}_{\vect{\beta}}^{-1}}{M}{K}$~\cite{Maiwald97}. We obtain a deterministic expression of the transmit power at each antenna
\begin{equation}
\label{eq:p_tilde}
p \to \frac{1}{M(M-K)}\mathrm{tr}\left(\mat{D}_{\vect{\beta}}^{-1}\mat{D}_{\vect{\gamma}}\sigma_\nu^2\right)
\end{equation}
thereby the total \glspl{pa} consumed power can be computed as
\begin{equation}
p_\mathrm{PAs} = M\alpha p^{1/2} \to \alpha\left(\frac{M}{M-K}\mathrm{tr}\left(\mat{D}_{\vect{\beta}}^{-1}\mat{D}_{\vect{\gamma}}\sigma_\nu^2\right)\right)^{1/2}.
\end{equation}
 
\section{Proof of Lemma~\ref{lemma1}}
\label{proof3}
The optimal number of active antennas lies in between $K+1$ (required by the \gls{zf} constraint) and $M$. By considering a continuous number of active antennas $x\in \mathbb{R}$, the function~(\ref{eq:p_cons,BS_asympt})
\begin{equation}
\label{eq:fx}
f(x) = t\left(\frac{x}{x-K}\right)^{1/2}+p_\mathrm{fix}+\mathcal{C}x
\end{equation}
where $t = \alpha\left(\mathrm{tr}\left(\mat{D}_{\vect{\beta}}^{-1}\mat{D}_{\vect{\gamma}}\sigma_\nu^2\right)\right)^{1/2}$, is convex in the domain $]K,M]$. To prove that, let us compute the first derivative of $f$
\begin{equation}
\label{eq:firder}
\frac{df(x)}{dx} = t\left(\frac{x^{-1/2}}{2\left(x-K\right)^{1/2}}-\frac{x^{1/2}}{2\left(x-K\right)^{3/2}}\right)+\mathcal{C}
= -\frac{t}{2}\left(\frac{Kx^{-1/2}}{\left(x-K\right)^{3/2}}\right)+\mathcal{C}.
\end{equation}
By taking the second derivative
\begin{equation}
\label{eq:secder}
\frac{d^2f(x)}{dx^2} = -\frac{tK}{2}\left(-\frac{x^{-3/2}}{2\left(x-K\right)^{3/2}}-
\frac{3x^{-1/2}}{2\left(x-K\right)^{5/2}}\right)
= \frac{tK}{4}\frac{4x-K}{x^{3/2}\left(x-K\right)^{5/2}}
\end{equation}
we can see that, for $x>K$, (\ref{eq:secder}) is always positive.
To minimize~(\ref{eq:fx}), one can compute its derivative with respect to $x$, given in~(\ref{eq:firder}), set it to zero, and isolate the constant term, obtaining the following polynomial
\begin{equation}
\label{eq:polyn}
x\left(x-K\right)^3-\frac{tK}{2\mathcal{C}}=0.
\end{equation}
From Descartes' rule of signs, (\ref{eq:polyn}) has 2 positive real roots. Among them, there is only one bigger than $K$. This can also be seen by visualizing~(\ref{eq:p_cons,BS_asympt}) in Fig.~\ref{fig:as_example}. The $\min\{\cdot\}$ and $\max\{\cdot\}$ operations guarantee that the final value stays within the allowed range. Due to the convexity of the function~(\ref{eq:fx}), the result of the ceil-floor operation $\left\lfloor\cdot\right\rceil$ is guaranteed to be the optimal integer value.


\section{Proof of Theorem~\ref{th:Ma_dag}}
\label{proof4}
Let us consider problem~(\ref{eq:p_cons,BS_papc_gen}). Fixing the value of $M_\mathrm{a}$, the minimization problem with respect to the precoding coefficients is
\begin{subequations}
\label{eq:internal}
\begin{align}
    \minimize_{\substack{\{w_{m,k,q}\}\\m=0,\dotsc,M_\mathrm{a}-1}} \quad\quad 	    & 	 p_\mathrm{BS}=p_\mathrm{PAs}+p_{\mathrm{fix}}+\mathcal{C}M_\mathrm{a} \label{eq:internal1}\\
    \mathrm{\;\;subject\ to} \quad\quad                & 	\mat{H}_q\mat{W}_q = \tilde{\mat{D}}_{\vect{\gamma}}^{1/2}\sigma_\nu \;\; \forall q, \label{eq:internal2}\\
    & p_m\leq p_{\mathrm{max}}\;\; \forall m. \label{eq:internal3}
\end{align}
\end{subequations}
From Theorem~\ref{th:p_PAs_asympt}, we know that the solution to the above problem (in the asymptotic wideband regime and \gls{iid} Rayleigh fading) when the per-antenna power constraints are not binding tends to allocate the power uniformly among the antennas, 
\begin{equation}
\label{eq:pm}
    p_m \to \overline{p} = \frac{1}{M_\mathrm{a}(M_\mathrm{a}-K)}\mathrm{tr}\left(\mat{D}_{\vect{\beta}}^{-1}\mat{D}_{\vect{\gamma}}\sigma_\nu^2\right) \quad \forall m.
\end{equation}
Let us consider two cases, recalling that we are considering an arbitrary value of $M_\mathrm{a}$:
\begin{enumerate}[a)]
	\item $\overline{p}\leq p_\mathrm{max}$, which guarantees that the above solution is optimal. Indeed, solution~(\ref{eq:pm}) minimizes~(\ref{eq:internal1}) while satistying~(\ref{eq:internal2}), and in this case it fulfils also~(\ref{eq:internal3}).
	\item $\overline{p}>p_\mathrm{max}$, which indicates that the above solution is not feasible by using $M_\mathrm{a}$ antennas. Indeed, to satisfy the constraints~(\ref{eq:internal3}), we would need to decrease every value of $p_m$ to $p_\mathrm{max}$ or a smaller quantity. By doing this, we would need a solution consuming less power (given that \glspl{pa} consumption increases monotonically with $p_m$) while fulfilling~(\ref{eq:internal2}). This is not possible, otherwise this new solution would correspond to the optimal one for the unconstrained problem, i.e., the one allocating the power $\overline{p}$ to all antennas.
\end{enumerate}
The above reasoning entails that, in the asymptotic wideband regime and \gls{iid} Rayleigh fading, the per-antenna power constraints imply a lower bound on the number of active antennas $M_\mathrm{a}$ (by using~(\ref{eq:pm}) and assuming $M_\mathrm{a}>K$)
\begin{equation}
\overline{p}\leq p_\mathrm{max} \iff M_\mathrm{a}^2-KM_\mathrm{a}-\frac{\mathrm{tr}\left(\mat{D}_{\vect{\beta}}^{-1}\mat{D}_{\vect{\gamma}}\sigma_\nu^2\right)}{p_\mathrm{max}} \geq 0,
\end{equation}
\begin{equation}
\label{eq:Ma_min_papc}
M_\mathrm{a} \geq \frac{1}{2}\left(K+\left(K^2+\frac{4\tr{\mat{D}_{\vect{\beta}}^{-1}\mat{D}_{\vect{\gamma}}\sigma_\nu^2}}{p_\mathrm{max}}\right)^{1/2}\right).
\end{equation}
Using this lower bound as a constraint in the original problem, we can drop the per-antenna power constraints and solve the equivalent problem
\begin{equation}
\label{eq:p_cons,BS_papc}
\begin{aligned}
    & \minimize_{M_a}  \quad  	    && 	\overline{p_\mathrm{BS}} = \alpha\left(\frac{M_a}{M_a-K}\mathrm{tr}\left(\mat{D}_{\vect{\beta}}^{-1}\mat{D}_{\vect{\gamma}}\sigma_\nu^2\right)\right)^{1/2}+ p_{\mathrm{fix}}+\mathcal{C}M_a\\
    & \mathrm{subject\ to} \quad    && M_a \geq \frac{1}{2}\left(K+\left(K^2+\frac{4\tr{\mat{D}_{\vect{\beta}}^{-1}\mat{D}_{\vect{\gamma}}\sigma_\nu^2}}{p_\mathrm{max}}\right)^{1/2}\right), \\
    &                               && M_a\leq M.
\end{aligned}
\end{equation}
The condition in $\mathbf{(As4)}$ ensures that the power constraint is always satisfied by activating all the $M$ antennas. It can be simply derived from~(\ref{eq:pm}) by setting $M_\mathrm{a}=M$.

\bibliographystyle{IEEEtran}
\bibliography{References}

\vfill

\end{document}